\newcommand{\remove}[1]{}
\newtheorem{theo}{Theorem}
\newtheorem{claim}[theo]{Claim}
\newtheorem{lemma}[theo]{Lemma}
\newtheorem{defi}[theo]{Definition}
\newtheorem{coro}[theo]{Corollary}
\newenvironment{proof}[1][{}]{
  \begin{trivlist}\item[]\textit{Proof #1}\quad}%
  {\hfill\hspace*{\fill}~$\square$\end{trivlist}}
\begin{document}

\title{
Linear kernels for $k$-tuple and liar's domination\\
in bounded genus graphs}

\author{
Arijit Bishnu\\
{\normalsize ACM Unit}\\
{\normalsize Indian Statistical Institute}\\
{\normalsize Kolkata, INDIA}\\
\url{arijit@isical.ac.in}
\and
Arijit Ghosh
\footnote{
Supported by the Indo-German Max Planck Center for Computer Science (IMPECS).
}
\thanks{
Part of this
work was done when the author was a visiting scientist in ACM Unit,
Indian Statistical Institute, Kolkata.}\\
{\normalsize D1: Algorithms \& Complexity}\\
{\normalsize Max-Planck-Institut f\"ur Informatik}\\
{\normalsize Saarbr\"ucken, Germany}\\
\url{agosh@mpi-inf.mpg.de}
\and
Subhabrata Paul\\
{\normalsize ACM Unit}\\
{\normalsize Indian Statistical Institute}\\
{\normalsize Kolkata, INDIA}\\
\url{paulsubhabrata@gmail.com}
}
\maketitle

\pagenumbering{roman}

\begin{abstract}
A set $D\subseteq V$ is called a \emph{$k$-tuple dominating set} of a graph $G=(V,E)$ if 
$\left| N_G[v] \cap D \right| \geq k$ for all $v \in V$, where
$N_G[v]$ denotes the closed neighborhood of $v$. A set $ D \subseteq V$ is called a \emph{liar's dominating set} 
of a graph $G=(V,E)$ if 
\begin{itemize}

\item[]~\,(i)
$\left| N_G[v] \cap D \right| \geq 2$ for all $v\in V$, and 

\item[]~(ii)
for every pair of 
distinct vertices $u, v\in V$, $\left| (N_G[u] \cup N_G[v]) \cap D \right| \geq 3$. 

\end{itemize}
Given a graph $G$, the decision 
versions of \textsc{$k$-Tuple Domination Problem} and the \textsc{Liar's Domination Problem} are to check whether 
there exists a $k$-tuple dominating set and a liar's dominating set of $G$ of a given cardinality, respectively. These 
two problems are known to be \textsf{NP}-complete~\cite{LiaoChang2003, Slater2009}. In this paper, we study the parameterized 
complexity of these problems. We show that the \textsc{$k$-Tuple Domination Problem} and the 
\textsc{Liar's Domination Problem} are \textsf{W[2]}-hard for general graphs but they admit linear kernels for 
graphs with bounded genus.
\end{abstract}

\paragraph{Keywords.}
$k$-tuple domination, liar's domination, planar graphs, bounded genus graphs, kernelization, 
and \textsf{W[2]}-hard

\clearpage

\pagenumbering{arabic}

\section{Introduction}
Let $G=(V,E)$ be a graph. For a vertex $v\in V$, let $N_G(v)=\{u\in V
|uv \in E\}$ and $N_G[v] = N_G(v) \cup \{v\}$ denote the
open and closed neighborhoods of $v$, respectively. A set $D\subseteq V$ is
called a \emph{dominating set} of a graph $G=(V,E)$ if $|N_G[v] \cap D| \geq 1$
for all $v \in V$. The \emph{domination number} of a graph $G$, denoted by
$\gamma(G)$, is the minimum cardinality of a dominating set of $G$. The concept
of domination has been well studied. Depending upon various applications,
different variations of domination have appeared in the literature
\cite{HaynesHedetniemiSlater1998,HaynesHedetniemiSlater11998}.

Among different variations of domination, $k$-tuple domination and liar's
domination are two important and well studied type of domination
\cite{Harary2000,LiaoChang2002,LiaoChang2003,RodenSlater2009,Slater2009}. A set
$D\subseteq V$ is called a \emph{$k$-tuple dominating set} of a graph $G=(V,E)$
if each vertex $v \in V$ is dominated by at least $k$ number of vertices in $D$,
that is, $|N_G[v] \cap D| \geq k$ for all $v \in V$. The concept of $k$-tuple
domination in graphs was introduced in~\cite{Harary2000}. For $k=2$ and $3$, it
is called \emph{double domination} and \emph{triple domination} respectively.
The \emph{$k$-tuple domination number} of a graph $G$, denoted by
$\gamma_{k}(G)$, is the minimum cardinality of a $k$-tuple dominating set of
$G$. It is a simple observation that for the existence of a $k$-tuple dominating
set, we need $\delta(G)\geq k-1$, where $\delta(G)$ is the minimum degree of
$G$. On the other hand, liar's domination is a new variation of domination and was introduced in 2009 by 
Slater~\cite{Slater2009}. A set $D\subseteq V$ is called a \emph{liar's dominating
set} of a graph $G=(V,E)$ if the following two conditions are met:
\begin{description}
 \item[condition (i)] $|N_G[v] \cap D| \geq 2$ for all $v\in V$
 \item[condition (ii)] for every pair of distinct vertices $u, v\in V$,
$|(N_G[u]\cup N_G[v])\cap D|\geq 3$
\end{description}
In a network guarding scenario, if sentinels are placed
in the vertices of the dominating set, then the graph (network) is guarded.
Consider the situation where a single sentinel is unreliable or lies and
we do not know the exact sentinel that lies. We then need a liar's dominating
set to guard the network. The \emph{liar's domination number} of a graph $G$,
denoted by $\gamma_{LR}(G)$, is the minimum cardinality of a liar's dominating
set of $G$. Formally, the decision versions of \textsc{$k$-Tuple Domination
Problem} and \textsc{Liar's Domination Problem} are defined as follows.

\underline{\textsc{$k$-Tuple Domination Problem}}
\begin{description}
\item{\emph{Instance:}} A graph $G=(V,E)$ and a nonnegative integer $p$.

\item{\emph{Question:}} Does there exist a $k$-tuple dominating set of cardinality at most $p$?
\end{description}

\underline{\textsc{Liar's Domination Problem}}
\begin{description}
\item{\emph{Instance:}} A graph $G=(V,E)$ and a nonnegative integer $p$.
\item{\emph{Question:}} Does there exist a liar's dominating set of cardinality
at most $p$?
\end{description}

Note that, every liar's dominating set is a double dominating set and every
triple dominating set is a liar's dominating set. Hence, liar's domination
number lies between double and triple domination number, that is,
$\gamma_{2}(G)\leq \gamma_{LR}(G)\leq \gamma_{3}(G)$.

The rest of the paper is organized as follows. Section $2$ introduces some
pertinent definitions and preliminary results that are used in the rest of the
paper and a brief review on the progress in the study of parametrization for
domination problems. Section $3$ deals with the hardness results of both
$k$-tuple domination problem and liar's domination problem. In Section $4$, we
show that both \textsc{$k$-Tuple Domination Problem} and \textsc{Liar's Domination
Problem} admit linear kernel in planar graphs. In Section $5$, we
extend the results for bounded genus graphs. Finally, Section $6$ concludes the
paper.

\section{Preliminaries}

Let $G=(V,E)$ be a graph. Let $G[S]$, $S \subseteq V$ denote the induced subgraph of $G$ on the vertex set $S$. The 
\emph{distance} between two vertices $u$ and $v$ in a graph $G$ is the number of edges in a shortest path connecting them and 
is denoted as $d_G(u,v)$. The degree of a vertex $v \in V(G)$, denoted by $deg_G(v)$, is the number of neighbors of $v$.

\subsection{Graphs on surfaces}
\label{subsec:graphsurfacenotation}

In this subsection, we recall some basic facts about graphs on surfaces following the discussion in~\cite{Fominkernelgenus}. 
The readers are referred to~\cite{Mohar2001} for more details. A \emph{surface} $\Sigma$ is a compact $2$-manifold without 
boundary. Let $\Sigma_0$ denote the sphere $\{(x,y,z)| ~x^2+y^2+z^2=1\}$. A \emph{line} and \emph{O-arc} are subsets of $\Sigma$ 
that are homeomorphic to $[0,1]$ and a circle respectively. A subset of $\Sigma$ meeting the drawing only in vertices of $G$ is 
called \emph{$G$-normal}. If an O-arc is G-normal, then it is called a \emph{noose}. The length of a noose is the number of its 
vertices. The \emph{representativity} of $G$ embedded in $\Sigma\neq \Sigma_0$ is the smallest length of a non-contractible 
noose in $\Sigma$ and it is denoted by \emph{rep$(G)$}.

The classification theorem for surfaces states that, any surface $\Sigma$ is homeomorphic to either a surface $\Sigma^h$ which 
is obtained from a sphere by adding $h$ handles (orientable surface), or a surface $\Sigma^k$ which is obtained from a sphere 
by adding $k$ crosscaps (non-orientable surface)~\cite{Mohar2001}. The \emph{Euler genus} of a non-orientable surface $\Sigma$, 
denoted by \emph{eg$(\Sigma)$}, is the number of crosscaps $k$ such that $\Sigma \cong \Sigma^k$ and for an orientable surface, 
\emph{eg$(\Sigma)$} is twice the number of handles $h$ such that $\Sigma \cong \Sigma^h$. Given a graph $G$, Euler genus of $G$, 
denoted by eg$(G)$, is the minimum eg$(\Sigma)$, where $\Sigma$ is a surface in which $G$ can be embedded. The \emph{Euler characteristic} 
of a surface $\Sigma$ is defined as $\chi(\Sigma)=2-\textup{eg}(\Sigma)$. For a graph $G$, $\chi(G)$ denotes the largest number $t$ 
for which $G$ can be embedded on a surface $\Sigma$ with $\chi(\Sigma)=t$. Let $G=(V,E)$ be a $2$-cell embedded graph in $\Sigma$, 
that is, all the faces of $G$ is homeomorphic to an open disk. If $F$ is the set of all faces, then \emph{Euler's formula} tells 
that $V-E+F=\chi(\Sigma)=2-\textup{eg}(\Sigma)$.

Next we define a process called \emph{cutting along a noose $N$}. Although the formal defi is given in~\cite{Mohar2001}, we 
follow a more intuitive defi given in~\cite{Fominkernelgenus}. Let $N$ be a noose in a $\Sigma$-embedded graph $G=(V,E)$. 
Suppose for any $v\in N\cap V$, there exists an open disk $\Delta$ such that $\Delta$ contains $v$ and for every edge $e$ 
adjacent to $v$, $e\cap \Delta$ is connected. We also assume that $\Delta-N$ has two connected components $\Delta_1$ and 
$\Delta_2$. Thus we can define partition of $N_G(v)= N_G^1(v)\cup N_G^2(v)$, where $N_G^1(v)= \{u\in N_G(v)| uv\cap \Delta_1\neq \emptyset\}$ 
and $N_G^2(v)= \{u\in N_G(v)| uv\cap \Delta_2\neq \emptyset\}$. Now for each $v\in N\cap V$ we do the following:
\begin{enumerate}
\item remove $v$ and its incident edges
\item introduce two new vertices $v^1, v^2$ and
\item connect $v^i$ with the vertices in $N_G^i$, $i = 1, 2$.
\end{enumerate}
The resulting graph $\mathcal{G}$ is obtained from $\Sigma$-embedded graph $G$
by cutting along $N$. The following lemma is very useful in the proofs by
induction on the genus.
\begin{lemma}\textup{\cite{Fominkernelgenus}} \label{lemcutting}
Let $G$ be a $\Sigma$-embedded graph and let $\mathcal{G}$ be a graph obtained
from $G$ by cutting along a non-contractible noose $N$. Then one of the
following holds
\begin{itemize}
\item $\mathcal{G}$ is the disjoint union of graphs $G_1$ and $G_2$ that can be
embedded in surfaces $\Sigma_1$ and $\Sigma_2$ such that $eg(\Sigma) =
eg(\Sigma_1) + eg(\Sigma_2)$ and $eg(\Sigma_i) > 0, i = 1, 2$.
\item $\mathcal{G}$ can be embedded in a surface with Euler genus strictly
smaller than eg$(\Sigma)$.
\end{itemize}
\end{lemma}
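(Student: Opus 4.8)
The plan is to treat the noose $N$ as a non-contractible simple closed curve on $\Sigma$ and argue entirely by surface surgery, reducing the combinatorial splitting operation to the topological operation of cutting $\Sigma$ along $N$ and capping the resulting boundary circles with open disks. First I would verify that the vertex-splitting procedure (replacing each $v \in N \cap V$ by $v^1, v^2$ and distributing the incident edges according to the two sides $\Delta_1, \Delta_2$ of the local disk $\Delta$) is exactly the combinatorial shadow of cutting the surface along $N$: the arcs of $N$ lying in face interiors separate those faces locally, the vertices on $N$ are duplicated, and filling each new boundary circle with a disk yields a closed surface into which $\mathcal{G}$ embeds. This identification transfers the whole question to the topology of the surface $\Sigma'$ obtained from $\Sigma$ by cutting along $N$ and capping off.

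Next I would split into the two cases available to a simple closed curve on a surface: $N$ separating or $N$ non-separating. If $N$ separates $\Sigma$, then the cut surface has two connected components, so $\mathcal{G}$ is a disjoint union $G_1 \sqcup G_2$, with $G_i$ embedded in the capped-off component $\Sigma_i$. Since a separating curve is necessarily two-sided, gluing the two pieces back along the capping disks exhibits $\Sigma$ as the connected sum $\Sigma_1 \# \Sigma_2$, and the additivity of Euler genus under connected sum---which follows from $\chi(\Sigma_1 \# \Sigma_2) = \chi(\Sigma_1) + \chi(\Sigma_2) - 2$ together with $\chi = 2 - eg$---gives $eg(\Sigma) = eg(\Sigma_1) + eg(\Sigma_2)$. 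This is the first alternative.

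In the non-separating case the cut surface stays connected, so $\mathcal{G}$ embeds in a single closed surface $\Sigma'$, and it remains to show $eg(\Sigma') < eg(\Sigma)$. Here I would compute the change in Euler characteristic directly from a regular neighborhood of $N$: if $N$ is two-sided its neighborhood is an annulus (Euler characteristic $0$, two boundary circles) and capping adds two disks, giving $\chi(\Sigma') = \chi(\Sigma) + 2$; if $N$ is one-sided its neighborhood is a M\"obius band (one boundary circle) and capping adds one disk, giving $\chi(\Sigma') = \chi(\Sigma) + 1$. In either subcase $\chi$ strictly increases, so by $eg(\Sigma') = 2 - \chi(\Sigma')$ the Euler genus strictly decreases, which is the second alternative. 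Note that a non-separating curve is automatically non-contractible, so the hypothesis on $N$ is really only needed in the separating case.

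I expect the main obstacle to be the separating case, specifically the requirement $eg(\Sigma_i) > 0$ for both $i$: this is exactly where non-contractibility of $N$ is essential. The argument is that if some $\Sigma_i$ were a sphere (so $eg(\Sigma_i)=0$), then deleting its capping disk would show that the corresponding side of $N$ was itself a disk, so $N$ would bound a disk in $\Sigma$ and hence be contractible---a contradiction. Pinning down this step rigorously, together with the bookkeeping that the combinatorial cut really does produce the claimed surfaces (including the one-sided versus two-sided distinction), is the part that needs the most care; the Euler-characteristic computations themselves are routine once the neighborhood of $N$ is correctly described.
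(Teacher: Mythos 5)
This lemma is not proved in the paper at all: it is imported verbatim from the cited reference \cite{Fominkernelgenus} (where it in turn rests on standard surface topology), so there is no in-paper argument to compare yours against. Judged on its own merits, your surgery argument is the standard proof and is essentially correct: the identification of the combinatorial vertex-splitting with cutting $\Sigma$ along $N$ and capping the resulting boundary circles, the separating/non-separating dichotomy, the additivity $eg(\Sigma)=eg(\Sigma_1)+eg(\Sigma_2)$ via $\chi(\Sigma_1\#\Sigma_2)=\chi(\Sigma_1)+\chi(\Sigma_2)-2$, and the Euler-characteristic gain of $+2$ (annulus neighborhood) or $+1$ (M\"obius neighborhood) in the non-separating case are all as they should be. Two details you flagged are indeed the ones to nail down, and both go through: a separating simple closed curve is automatically two-sided (a one-sided curve has a M\"obius-band neighborhood with connected boundary, so its complement is connected), and the exclusion of $eg(\Sigma_i)=0$ only needs the easy direction that a curve bounding a disk is contractible, since if $\Sigma_i$ were a sphere then the corresponding side of the cut, being the sphere minus the capping disk, would itself be a disk with boundary $N$. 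The only remaining bookkeeping is that in the one-sided case the two copies $v^1,v^2$ of each cut vertex land on a single boundary circle rather than two, but this does not affect the embedding of $\mathcal{G}$ in the capped surface or the genus count.
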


A \emph{planar graph} $G=(V,E)$ is a graph that can be embedded in the plane. We term such an embedding as a \emph{plane graph}.

\subsection{Parameterization and domination}
A \emph{parameterized problem} is a language $L\subseteq \Sigma^* \times \mathds{N}$, where $\Sigma^*$ denotes the set of all finite strings over a finite alphabet $\Sigma$. A parameterized problem $L$ is \emph{fixed-parameter tractable} if the question ``$(x,p)\in L$'' can be decided in time $f(p)\cdot |x|^{O(1)}$, where $f$ is a computable function on nonnegative integers, $x$ is the instance of the problem and $p$ is the parameter. The corresponding complexity class is called \textsf{FPT}. Next we define a reducibility concept between two parameterized problems.

\begin{defi} \textup{\cite{DowneyFellows,Niedermeier}}
Let $L, L'\subseteq \Sigma^*\times \mathds{N}$ be two parameterized problems. We say that $L$ reduces to $L'$ by a \emph{standard parameterized m-reduction} if there are functions $p\mapsto p'$ and $p\mapsto p''$ from $\mathds{N}$ to $\mathds{N}$ and a function $(x,p)\mapsto x'$ from $\Sigma^*\times \mathds{N}$ to $\Sigma^*$ such that
\begin{enumerate}
\item $(x,p)\mapsto x'$ is computable in time $p''|x|^c$ for some constant $c$ and
\item $(x,p)\in L$ if and only if $(x',p')\in L'$.
\end{enumerate}
\end{defi}

A parameterized problem is in the class \textsf{W[i]}, if every instance $(x, p)$ can be transformed (in fpt-time) to a combinatorial circuit that has height at most $i$, such that $(x, p)\in L$ if and only if there is a satisfying assignment to the inputs, which assigns $1$ to at most $p$ inputs. A problem $L$ is said to be \textsf{W[i]}\emph{-hard} if there exists a standard parameterized m-reduction from all the problems in \textsf{W[i]} to $L$ and in addition, if the problem is in \textsf{W[i]}, then it is called \textsf{W[i]}\emph{-complete}.

\remove{
The height is the largest number of logical units with unbounded fan-in on any
path from an input to the output. The number of logical units with bounded
fan-in on the paths must be limited by a constant that holds for all instances
of the problem.}

Next we define the reduction to problem kernel, also simply referred to as \emph{kernelization}.

\begin{defi} \textup{\cite{Niedermeier}}
Let $L$ be a parameterized problem. By reduction to problem kernel, we mean to replace instance $I$ and the parameter $p$ of $L$ by a ``reduced'' instance $I'$ and by another parameter $p'$ in polynomial time such that
\begin{itemize}
\item $p'\leq c\cdot p$,  where $c$ is a constant,
\item $I'\leq g(p)$, where $g$ is a function that depends only on $p$, and
\item $(I,p)\in L$ if and only if $(I',p')\in L$.
\end{itemize}
The reduced instance $I'$ is called the \emph{problem kernel} and the size of the problem kernel is said to be bounded by $g(p)$.
\end{defi}

In parameterized complexity, domination and its variations are well studied problems. The decision version of domination problem is \textup{W[2]}-complete for general graphs~\cite{DowneyFellows}. But this problem is FPT when restricted to planar graphs~\cite{Alberkernel} though it is still \textup{NP}-complete for this graph class~\cite{GareyJohnson}. Furthermore, for bounded genus graphs, which is a super class of planar graphs, domination problem remains FPT~\cite{Fominkernelgenus}. It was proved that dominating set problem possesses a linear kernel in planar graphs~\cite{Alberkernel} and in bounded genus graphs~\cite{Fominkernelgenus}. Also domination problem admits polynomial kernel on graphs excluding a fixed graph $H$ as a minor~\cite{Gutner2009} and on $d$-degenerated graphs~\cite{kerneldomdegenerate}.
A search tree based algorithm for domination problem on planar graphs, which runs in $O(8^p n)$ time, is proposed in~\cite{Albersearchtree}. For bounded genus graphs, similar search tree based algorithm is proposed in~\cite{Ellis2004} and has a time complexity of $O((4g+40)^p n^2)$, where $g$ is the genus of the graph. Algorithms with running time of $O(c^{\sqrt{p}}n)$ for domination problem on planar graphs have been devised in~\cite{Alber2002,Alber2004,Fomin2003,Fominkernelgenus}. Like domination problem, \textsc{$k$-Tuple Domination Problem} and \textsc{Liar's Domination Problem} are both \textsf{NP}-complete~\cite{LiaoChang2003,Slater2009} for general graphs. However, these problems have been polynomially solved for different graph classes~\cite{LiaoChang2002, LiaoChang2003, PandaPaultree, PandaPaulpig}. But for planar graphs and hence for graphs with bounded genus, \textsc{$k$-Tuple Domination Problem} remains \textsf{NP}-complete~\cite{Lee2008}. In~\cite{Slater2009}, though the \textsf{NP}-completeness proof is given for general graphs, it can be verified that using the same construction one can find the \textsf{NP}-completeness of 
\textsc{Liar's Domination Problem} in planar graphs, see Lemma~\ref{lem-liar-dom-planar-NP} in Appendix~\ref{sec-appendix}.

Some generalization of classical domination problem have been studied in the literature from parameterized point of view. Among those problems, $k$-dominating threshold set problem, $[\sigma, \rho]$-dominating set problem (also known as generalized domination) are generalized version of $k$-tuple dominating set problem. In~\cite{Golovachthresholddom2008}, it is proved that $k$-dominating threshold set problem is FPT in $d$-degenerated graphs. $[\sigma, \rho]$-domination is studied in~\cite{sigmarhohardness,Telle1997,Rooijsigmarhodom2009}. A set $D$ of vertices of a graph $G$ is $[\sigma,\rho]$-dominating set if for any $v \in D, |N(v)\cap D| \in \sigma$ and for any $v\notin D, |N(v)\cap D| \in \rho$ for any two sets $\sigma$ and $\rho$. It is known that $[\sigma, \rho]$-domination is FPT when parameterized by treewidth~\cite{Rooijsigmarhodom2009}. By Theorem 32 of~\cite{Alber2002}, it follows that $k$-tuple domination is FPT on planar graphs. But there is no explicit kernel for both $k$-tuple domination and liar's domination problem in the literature.

There have been successful efforts in developing meta-theorems like the celebrated Courcelle's theorem~\cite{Courcelle92} which states that all graph properties definable in monadic second order logic can be decided in linear time on graphs of bounded tree-width. This also implies FPT algorithms for bounded tree-width graph for these problems. In case of kernelization in bounded genus graphs, Bodlaender et al. give two meta-theorems~\cite{Bodlaender2009}. The first theorem says that all problems expressible in counting monadic second order (CMSO) logic and satisfying a coverability property admit a polynomial kernel on graphs of bounded genus and the second theorem says that all problems that have a finite integer index and satisfy a weaker coverability property admit a linear kernel on graphs of bounded genus. It is easy to see that both $k$-tuple and liar's domination problems can be expressed in CMSO logic. Let $G=(V,E)$ be an instance of a graph problem $\Pi$ such that $G$ is embeddable in a surface of Euler genus at most $r$. The basic idea of quasi-coverable property for $\Pi$ is that there exists a set $S\subseteq V$ satisfying the conditions of $\Pi$ such that the tree-width of $G\setminus R^r_G(S)$ is at most $r$ where $R^r_G(S)$ is a special type of reachability set from $S$. In domination type of problems, this reachability set is actually the whole graph and hence these problems satisfy the quasi-coverable property. The basic idea of strong monotonicity for a graph problem $\Pi$ is roughly as follows: Let $\mathcal{F}_i$ be a class of graphs $G$ having a specific set of vertices $S$ termed as the boundary of $G$ such that $|S|=i$. The glued graph $G=G_1\oplus G_2$ of $G_1$ and $G_2$ is the graph which is obtained by taking the disjoint union of $G_1$ and $G_2$ and joining $i$ edges between the vertices of the boundary sets. A problem $\Pi$ is said to satisfy the strong monotonicity if for every boundaried graph $G=(V,E)\in \mathcal{F}_i$, there exists a set $W\subseteq V$ of a specific cardinality which satisfy the property of $\Pi$ such that for every boundaried graph $G'=(V',E')\in \mathcal{F}_i$ with a set $W' \subseteq V'$, satisfying the property of $\Pi$, the vertex set $W\cup W'$ satisfies the property of $\Pi$ for the glued graph $G=G \oplus G'$. It can be verified easily that both $k$-tuple domination and liar's domination problems satisfy the strongly monotone property. The strongly monotone property implies the finite integer index for these problems. Hence, by the second meta-theorem in~\cite{Bodlaender2009}, both $k$-tuple and liar's domination problems admit linear kernels for graphs on bounded genus. Though these meta-theorems provide simple criteria to decide whether a problem admits a linear or polynomial kernel, finding a linear kernel with reasonably small constants for a specific problem is a worthy topic of further research~\cite{Bodlaender2009}. In this paper, we have obtained linear kernels with small constants for both the problems on bounded genus graphs. We have also proved the \textsf{W[2]}-hardness for $k$-tuple and liar's domination for general graphs.

\section{Hardness results in general graphs}

In this section, we show that \textsc{$k$-tuple Domination Problem} and \textsc{Liar's Domination Problem} 
are \textsf{W[2]}-hard. In~\cite{sigmarhohardness}, it is proved that $[\sigma,\rho]$-domination problem 
for any recursive sets $\sigma$ and $\rho$ is \textsf{W[2]}-hard. This implies the hardness for $k$-tuple 
domination in general graphs. But in this paper, we have come up with a simple \textsf{W[2]}-hardness proof 
for $k$-tuple domination in general graphs. To prove this, we show standard parameterized m-reductions from 
\textsc{Domination Problem}, which is known to be \textsf{W[2]}-complete~\cite{DowneyFellows}, to 
\textsc{$k$-Tuple Domination Problem} and \textsc{Liar's Domination Problem}, respectively.

\begin{theo}
\textsc{$k$-Tuple Domination Problem} is $\mathsf{W[2]}$-hard.
\end{theo}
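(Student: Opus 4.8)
The plan is to give a standard parameterized m-reduction from the \textsc{Domination Problem}, which is \textsf{W[2]}-complete~\cite{DowneyFellows}, to the \textsc{$k$-Tuple Domination Problem} (for each fixed $k$). Given an instance $(G,p)$ of \textsc{Domination} with $G=(V,E)$, I would build a graph $G'$ as follows. Introduce $k-1$ new ``apex'' vertices $A=\{a_1,\dots,a_{k-1}\}$, make $A$ a clique, and join every $a_i$ to every vertex of $V$; finally add a single ``enforcer'' vertex $b$ whose only neighbours are the vertices of $A$. The parameter is mapped by $p\mapsto p'=p+k$. The construction is clearly computable in polynomial time, and one checks that $\delta(G')\geq k-1$, so that a $k$-tuple dominating set can exist.

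The role of $b$ is the heart of the argument. Since $N_{G'}[b]=\{b\}\cup A$ contains exactly $k$ vertices, the requirement $|N_{G'}[b]\cap D'|\geq k$ forces $\{b\}\cup A\subseteq D'$ for every $k$-tuple dominating set $D'$ of $G'$. Once all apexes are known to lie in $D'$, the count at any $v\in V$ is $|N_{G'}[v]\cap D'|=|N_G[v]\cap D'_V|+|A|$, where $D'_V=D'\cap V$ and where I use that $b$ is \emph{not} adjacent to $V$; hence $|N_G[v]\cap D'_V|\geq k-(k-1)=1$, so $D'_V$ is an ordinary dominating set of $G$, of size $|D'|-k\leq p$. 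For the converse, given a dominating set $D$ of $G$ with $|D|\leq p$, I would take $D'=D\cup A\cup\{b\}$ and verify the two conditions directly: every $v\in V$ is covered once by $D$ and $k-1$ times by $A$, every $a_i$ is covered by the $k$ vertices of $A\cup\{b\}$, and $b$ is covered by $\{b\}\cup A$. This yields $(G,p)\in\textsc{Domination}$ if and only if $(G',p+k)\in\textsc{$k$-Tuple Domination}$.

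The step I expect to be the main obstacle is precisely the reverse direction, namely extracting a dominating set of size \emph{at most} $p$. The naive construction that only adds the $k-1$ universal apexes (with budget $p+k-1$) is tempting but fails: a $k$-tuple dominating set could omit some apex and instead ``pay'' for a double dominating set inside $V$, so the induced set $D'_V$ would not be guaranteed to have size $\leq p$, and $G'$ could then possess a small $k$-tuple dominating set even when $\gamma(G)>p$. The enforcer $b$ is introduced exactly to rule this out: it pins all of $A$ into the solution at the cost of a single extra, fully accounted-for vertex, which is what makes the size bookkeeping $|D'_V|=|D'|-k\leq p$ go through cleanly.
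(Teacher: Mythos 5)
Your construction is the same as the paper's up to renaming: the paper's gadget $V_k=\{u_1,\dots,u_k\}$ is a $k$-clique with $u_1,\dots,u_{k-1}$ joined to all of $V$ and $u_k$ kept off $V$, so $u_k$ plays exactly the role of your enforcer $b$ (its closed neighbourhood has exactly $k$ vertices, forcing $V_k\subseteq D'$), and the parameter map $p\mapsto p+k$ and both directions of the equivalence match. The proposal is correct and takes essentially the same approach as the paper.
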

\begin{proof}
We show a standard parameterized m-reduction from \textsc{Domination Problem} to \textsc{$k$-Tuple
Domination Problem}. Let $<G=(V,E),p>$ be an instance of \textsc{Domination
Problem}. We construct an instance $<G'=(V',E'), p'>$ of the \textsc{$k$-Tuple
Domination Problem} as follows: $V' = V \cup V_k$ where $V_k = \{u_1, u_2,
\ldots, u_k\}$ and $ E' = E\cup \{v_iu_j| v_i \in V \mbox{ and } u_j \in V_k
\setminus u_k \} \cup \{u_iu_j| u_i, u_j \in V_k, i \not= j\}$. Also set
$p'=p+k$. The construction of $G'$ from $G$ in case of triple domination is
illustrated in Figure \ref{figk-dom}.

\begin{figure}[h]
  \centering
  \includegraphics[width=.5\textwidth]{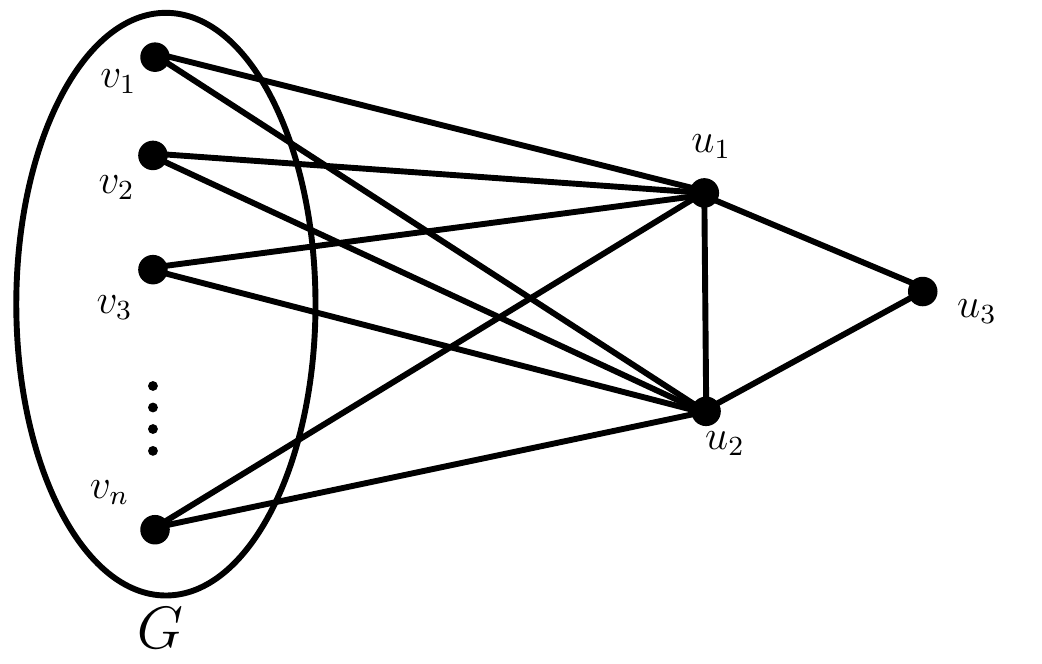}
	\caption{Construction of $G'$ from $G$ for triple domination}
	\label{figk-dom}
\end{figure}

\begin{claim} 
$G$ has a dominating set of size at most $p$ if and only if
$G'$ has a $k$-tuple dominating set of size at most $p'$.
\end{claim}

\begin{proof} 
Let $D$ be a dominating set of $G$ of cardinality
at most $p$ and $D'= D \cup V_k$.
Each $v_i \in V$ is dominated by at least one vertex from $D$ and by $k-1$
vertices from $V_k$. Each $u_i \in V_k$ is dominated by $k$ vertices of $V_k$.
Thus, $D'$ is a $k$-tuple dominating set of $G'$ of cardinality at most
$p'$.
\remove{It is easy to verify
that for each vertex $x\in V'$, $|N_{G'}[x]\cap D'|\geq k$, i.e., $D'$ is a
$k$-tuple dominating set of $G'$ of cardinality $p+k=p'$.
}

Conversely, let $D'$ be a $k$-tuple dominating set of $G'$ of cardinality at most $p'$.
Note that each $k$-tuple dominating set contains the set $V_k$ because to
dominate $u_k$ by $k$ vertices we must select all the vertices of $V_k$. Let
$D = D'\setminus V_k$. Clearly $D \subseteq V$ and $|D| \leq p$. Now for each
$v\in V$, $|N_G[v]\cap D|\geq 1$ because otherwise, there exists a vertex $v\in
V$ such that $|N_{G'}[v]\cap D'|=k-1$. This is a contradiction because $D'$
is a $k$-tuple dominating set of $G'$.  Thus $D$ is a dominating set of $G$ of
cardinality at most $p$.

Hence, $G$ has a dominating set of size at most $p$ if and only if $G'$ has a $k$-tuple
dominating set of size at most $p'$.
\end{proof}

Thus, \textsc{$k$-Tuple Domination Problem} is \textsf{W[2]}-hard.
\end{proof}

Next we show the \textsf{W[2]}-hardness of \textsc{Liar's Domination Problem}.

\begin{theo}
\textsc{Liar's Domination Problem} is $\mathsf{W[2]}$-hard.
\end{theo}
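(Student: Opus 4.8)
The plan is to give a standard parameterized m-reduction from \textsc{Domination Problem} (which is \textsf{W[2]}-complete~\cite{DowneyFellows}) to \textsc{Liar's Domination Problem}, mirroring the structure of the previous proof. Starting from an instance $\langle G=(V,E),p\rangle$ of \textsc{Domination Problem}, I would build $G'=(V',E')$ by attaching a small constant-size gadget: introduce two ``apex'' vertices $a,b$, each made adjacent to every vertex of $V$ and to each other; append a pendant vertex $a'$ to $a$ and a pendant vertex $b'$ to $b$; and finally add one vertex $z$ adjacent only to $a$ and $b$. I would set $p'=p+4$. The target invariant is $\gamma_{LR}(G')=\gamma(G)+4$, so that $G$ has a dominating set of size at most $p$ if and only if $G'$ has a liar's dominating set of size at most $p'$.

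For the forward direction, given a dominating set $D$ of $G$ with $|D|\le p$, I would take $D'=D\cup\{a,a',b,b'\}$ and verify the two liar's conditions. Condition (i) is immediate: every $v\in V$ sees both $a$ and $b$, each pendant sees its stem, $a$ and $b$ see each other, and $z$ sees $a,b$. For condition (ii), a pair $x,y\in V$ already contributes $a,b$ together with at least one dominator $w\in D\cap(N_G[x]\cup N_G[y])$, a third distinct vertex since $w\in V$; every pair meeting the gadget is handled using that $a,a',b,b'$ all lie in $D'$ and that $N_{G'}[a]$ and $N_{G'}[b]$ each contain all of $V$. The only place the hypothesis that $D$ dominates $G$ is used is the pair $(v,z)$, whose joint neighborhood contains $a,b$ together with a dominator of $v$.

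The backward direction is where the real work lies, and the vertex $z$ is the device that makes it clean. First, condition (i) applied to each degree-one pendant forces $a,a',b,b'$ into every liar's dominating set $D'$. I would then split on whether $z\in D'$. If $z\notin D'$, then for each $v\in V$ condition (ii) on the pair $(v,z)$ reduces to $|(N_G[v]\cap D')\cup\{a,b\}|\ge 3$, which forces $N_G[v]\cap(D'\cap V)\neq\emptyset$; hence $D'\cap V$ already dominates $G$ and has size $|D'|-4\le p$. If instead $z\in D'$, then $D'$ spends five vertices on the gadget, so $|D'\cap V|\le p-1$; since $z$ is non-adjacent to $V$, condition (ii) on pairs $x,y\in V$ only guarantees that $D'\cap V$ dominates all but at most one vertex, and adding that single exceptional vertex yields a dominating set of size at most $(p-1)+1=p$. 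Either way we recover a dominating set of $G$ of size at most $p$.

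The main obstacle, which this construction is designed to circumvent, is that condition (ii) inspects \emph{pairs} and so cannot by itself detect a lone undominated vertex, producing a stubborn off-by-one gap between the forward and backward bounds in the naive two-apex construction. The role of $z$ is exactly to close this gap: when $z$ is left out of $D'$ it turns the pair test $(v,z)$ into a genuine per-vertex domination test, and when $z$ is taken into $D'$ it consumes precisely the one unit of budget that would otherwise be needed to repair the single missed vertex. Once this is in place the remaining checks are the routine case analyses sketched above; since the gadget has constant size and $p'=p+4$ is a computable function of $p$, the map is a valid standard parameterized m-reduction and establishes that \textsc{Liar's Domination Problem} is \textsf{W[2]}-hard.
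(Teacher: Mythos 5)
Your construction is essentially identical to the paper's (their $u,v,u',v',w$ are your $a,b,a',b',z$; the only difference is your extra edge $ab$, which changes nothing), and your argument, including the budget/counting case split on whether $z\in D'$, is correct. The paper organizes the backward direction slightly differently---it bounds by $1$ the set of vertices of $V$ met by $D'$ only in the two apexes and then exchanges $w$ for that single exceptional vertex---but this is the same underlying idea, so the proposal matches the paper's proof in all essentials.
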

\begin{proof}
We show a standard parameterized m-reduction from \textsc{Domination Problem} to \textsc{Liar's
Domination Problem}. Let $<G=(V,E),p>$ be an instance of \textsc{Domination
Problem}. We construct an instance $<G'=(V',E'), p'>$ of the \textsc{Liar's
Domination Problem} as follows: $V'=V\cup \{u, u', v, v', w\}$ and $E'=E\cup
\{v_iu|v_i \in V\}\cup \{v_iv|v_i \in V \} \cup \{uu', vv', wu, wv\}$.
Also $p'=p+4$. The construction of $G'$ from $G$ is illustrated in Figure
\ref{fig liar}.

\begin{figure}[h]
	\centering
  \includegraphics[width=.5\textwidth]{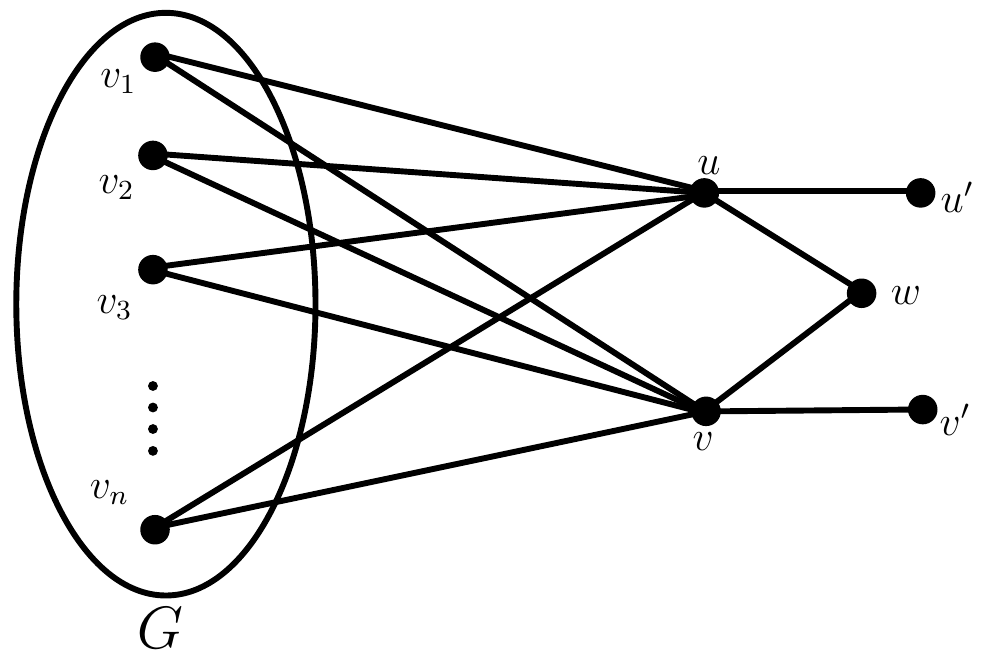}
	\caption{Construction of $G'$ from $G$}
	\label{fig liar}
\end{figure}

\begin{claim} 
$G$ has a dominating set of size at most $p$ if and only if
$G'$ has a liar's dominating set of size at most $p'$.
\end{claim}

\begin{proof}
Let $D$ be a dominating set of $G$ of cardinality
at most $p$ and $D'=D\cup \{u, u', v, v'\}$. It is easy to verify that for each
vertex $x\in V'$, $|N_{G'}[x]\cap D'|\geq 2$ and for every pair of vertices
$x,y\in V'$, $|(N_{G'}[x]\cup N_{G'}[y])\cap D'|\geq 3$. Hence $D'$ is a liar's
dominating set of $G'$ of cardinality at most $p+4=p'$.

Conversely, let $D'$ be a liar's dominating set of $G'$ of cardinality at most
$p'$. Each liar's dominating set must contain the set $\{u, u', v, v'\}$
because to doubly dominate $u'$ and $v'$ we must select the vertices $\{u, u',
v, v'\}$.

Let $X \subseteq V$ denote the set of vertices that are dominated by
exactly two vertices ($u$ and $v$) from $D'$. We claim $|X| \le 1$.
If there exists two such vertices $x,y \in V$, then
$|(N_{G'}[x] \cup N_{G'}[y]) \cap D'| = 2$ which violates condition (ii) of
liar's domination. We now deal with two cases:
\begin{description}
 \item[$|X| = 1 $:] Let $X=\{x\}$. Here $|N_{G'}[x]\cap D'|=2$. This implies
$w \in D'$, otherwise the pair $x$ and $w$ violates condition (ii) of liar's
domination. We set $D''=(D' \setminus \{w\})\cup \{x\}$. $D''$ is also a liar's
dominating set of $G'$ of cardinality at most $p'$. Note that all vertices in
$V$ is triply dominated by $D''$ and it does not contain $w$.

 \item[$|X| = 0 $:] In this case each vertex of $V$ is triply dominated by $D'$. Now if $w\notin D'$, we are done. Otherwise 
the set $D'\setminus \{w\}$ forms a liar's dominating set of $G'$ of cardinality at most $p'$ such that each vertex of $V$ 
is triply dominated by $D'\setminus \{w\}$.
\end{description}
Hence without loss of generality, we assume that there is a liar's dominating set $D'$
of $G'$ of cardinality at most $p'$ such that every vertex in $V$ is triply
dominated by $D'$ and $w\notin D'$. Let $D=D'\setminus \{u, u', v, v'\}$.
Clearly $D \subseteq V$ and $|D| \leq p$. Now for each $x\in V$, $|N_G[x]\cap D|\geq 1$ because otherwise, there exists a 
vertex $x\in V$ such that for the pair $x$ and $w$, condition (ii) of liar's domination is violated. This is a contradiction 
because $D'$ is a liar's dominating set of $G'$.  Thus $D$ is a dominating set of $G$ of cardinality at most $p$.

Hence, $G$ has a dominating set of size at most $p$ if and only if $G'$ has a liar's dominating set of size at most $p'$. 
\end{proof}

Thus, \textsc{Liar's Domination Problem} is \textsf{W[2]}-hard.
\end{proof}

\section{Linear kernels for planar graphs}

Having seen that $k$-tuple and liar's domination are \textsf{W[2]}-hard
in general graphs, we focus on planar graphs in this section and show
that they are FPT.

\subsection{Double domination}

In this subsection we show that \textsc{Double Domination Problem} in planar
graphs possesses a linear kernel. Our proof technique uses the region
decomposition idea of Alber et al.~\cite{Alberkernel}. First we describe the
reduction rules for kernelization.

\subsubsection{Reduction rule}

Let $G=(V,E)$ be the instance for \textsc{Double Domination Problem}. Consider
a pair of vertices $u,v\in V$. Let $N_G(u,v)=N_G(u)\cap N_G(v)$. We partition
the vertices of $N_G(u,v)$ in to three parts as follows:
\begin{eqnarray*}
N^1_G(u,v)&=&\{x\in N_G(u,v)| N_G(x)\setminus \{N_G(u,v) \cup \{u,v\} \}\neq
\emptyset\};\\
N^2_G(u,v)&=&\{x\in N_G(u,v)\setminus N^1_G(u,v) | N_G(x)\cap N^1_G(u,v) \neq \emptyset\};\\
N^3_G(u,v)&=& N_G(u,v)\setminus (N^1_G(u,v)\cup N^2_G(u,v)).
\end{eqnarray*}

\noindent{\textbf{Reduction Rule:}}
For every pair of distinct vertices $u,v\in V$, if $N^3_G(u,v) \neq \emptyset$, then
\begin{itemize}
\item delete all the vertices of $N^2_G(u,v)$ and
\item delete all vertices of $N^3_G(u,v)$ except one vertex.
\end{itemize}

\begin{lemma}
Let $G=(V,E)$ be a graph and $G'=(V', E')$ be the resulting graph after having
applied the reduction rule to $G$. Then $\gamma_2(G)=\gamma_2(G')$.
\end{lemma}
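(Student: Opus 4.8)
The plan is to establish the equality by proving the two inequalities $\gamma_2(G')\le\gamma_2(G)$ and $\gamma_2(G)\le\gamma_2(G')$ separately, using the ``region decomposition'' idea that the pair $\{u,v\}$ collectively controls its common neighborhood. I would first record the structural facts that make the rule plausible. Every vertex of $N_G(u,v)=N^1_G(u,v)\cup N^2_G(u,v)\cup N^3_G(u,v)$ is adjacent to both $u$ and $v$, so placing both $u$ and $v$ in a set doubly dominates all of $N^1_G(u,v)\cup N^2_G(u,v)\cup N^3_G(u,v)$ at once. By definition, a vertex of $N^3_G(u,v)$ has all its neighbours inside $\{u,v\}\cup N^2_G(u,v)\cup N^3_G(u,v)$, and a vertex of $N^2_G(u,v)$ has all its neighbours inside $\{u,v\}\cup N^1_G(u,v)\cup N^2_G(u,v)\cup N^3_G(u,v)$; in particular the deleted vertices have no neighbour outside this region. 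Finally, letting $z$ be the single retained vertex of $N^3_G(u,v)$, all of its other neighbours are deleted, so in $G'$ we have $N_{G'}(z)=\{u,v\}$; hence every double-dominating set of $G'$ must contain at least two of the three vertices $u,v,z$. This last fact is the anchor for the reverse inequality.

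For $\gamma_2(G')\le\gamma_2(G)$ I would start from a minimum double-dominating set $D$ of $G$ and push it off the deleted vertices. Replace the ``interior'' dominators $D\cap(N^2_G(u,v)\cup N^3_G(u,v))$ by $\{u,v\}$, obtaining $\hat D=(D\setminus(N^2_G(u,v)\cup N^3_G(u,v)))\cup\{u,v\}$. Because the deleted vertices have all their neighbours inside the region and because $u,v$ doubly dominate every common neighbour, the only vertices whose domination count can drop are $u$ and $v$ themselves; all surviving vertices of $V'$ (the external vertices, the members of $N^1_G(u,v)$, and $z$) remain doubly dominated. For the cardinality bound I would fix one vertex $x\in N^3_G(u,v)$ and use that its two required dominators must lie in $\{u,v\}\cup N^2_G(u,v)\cup N^3_G(u,v)$; this forces $|D\cap\{u,v\}|+|D\cap(N^2_G(u,v)\cup N^3_G(u,v))|\ge 2$, which is exactly what is needed to pay for inserting $u$ and $v$ without increasing the size.

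For the converse $\gamma_2(G)\le\gamma_2(G')$ I would take a minimum double-dominating set $D'$ of $G'$. By the anchor fact it contains at least two of $u,v,z$. If $u,v\in D'$, then the deleted common neighbours are already doubly dominated by $u,v$, and since adding vertices and edges only enlarges closed neighbourhoods, $D'$ itself is a double-dominating set of $G$ of the same size. In the remaining cases (when exactly one of $u,v$ is present, together with $z$), I would trade $z$ for the missing vertex of $\{u,v\}$ so as to bring both $u$ and $v$ into the set, and then re-check every affected vertex.

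The routine part is the replacement in the interior of the region; the main obstacle will be the boundary bookkeeping around $u$ and $v$ themselves, i.e.\ guaranteeing that after the exchange both $u$ and $v$ are still doubly dominated while the cardinality does not grow. This is delicate precisely when $u$ and $v$ are non-adjacent and, in the original solution, were dominated only through interior vertices of $N^2_G(u,v)\cup N^3_G(u,v)$: removing those vertices can strip $u$ or $v$ of a dominator, and restoring it must be done ``for free'' using the retained vertex $z$ (whose only neighbours in $G'$ are $u$ and $v$) together with the double-domination count forced by a vertex of $N^3_G(u,v)$. I therefore expect the proof to split into cases according to $|D\cap\{u,v\}|$ and the adjacency of $u$ and $v$, with $z$ playing the role of a universal repair vertex for the region.
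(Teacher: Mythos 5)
Your plan is essentially the paper's own exchange argument (push a minimum double dominating set of $G$ off $N^2_G(u,v)\cup N^3_G(u,v)$ onto $\{u,v\}$, and use the retained $N^3$-vertex $z$ to force structure in the reverse direction), but the step you defer as ``boundary bookkeeping around $u$ and $v$'' is not a routine verification: it is a step that fails, and the inequality $\gamma_2(G')\le\gamma_2(G)$ that your first half is meant to establish is false in general. Take $G=K_5$ minus the edge $uv$, with $V=\{u,v,a,b,w\}$; this graph is planar. Here $N_G(u,v)=\{a,b,w\}$, and since every vertex's neighbourhood is contained in $N_G(u,v)\cup\{u,v\}$ we get $N^1_G(u,v)=N^2_G(u,v)=\emptyset$ and $N^3_G(u,v)=\{a,b,w\}$, so the rule applies and deletes, say, $a$ and $b$, leaving the path on $\{u,w,v\}$ with edges $uw$ and $wv$. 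Now $\{a,b\}$ is a double dominating set of $G$, so $\gamma_2(G)=2$, whereas in $G'$ the closed neighbourhoods of $u$ and of $v$ have only two elements each, forcing $\{u,w\}\subseteq D'$ and $\{v,w\}\subseteq D'$, i.e.\ $\gamma_2(G')=3$. This is exactly the configuration you single out as delicate: $u\not\sim v$ and both are dominated in the optimum only through interior vertices; after the exchange each of $u$ and $v$ is left with a single dominator (itself), the size budget extracted from a vertex of $N^3_G(u,v)$ is already spent on inserting $u$ and $v$, and $z$ cannot repair both for free.

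You have in fact put your finger on the precise point where the paper's own proof is also deficient: the paper replaces the (at most two) solution vertices lying in $N^2_G(u,v)\cup N^3_G(u,v)$ by $u$ and $v$ and checks that the common neighbours of $u$ and $v$ remain doubly dominated, but never checks $u$ and $v$ themselves, and its claim that some minimum double dominating set of $G$ avoids $N^2_G(u,v)\cup N^3_G(u,v)$ fails for the example above. So the gap in your proposal cannot be closed as written; the reduction rule itself would have to be strengthened (for instance, retaining enough of $N^3_G(u,v)$ that $u$ and $v$ can still be doubly dominated at no extra cost) before either your argument or the paper's goes through.
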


\begin{proof}
Let $u, v\in V$ such that $N^3_G(u,v) \neq \emptyset$. Now if $N^2_G(u,v)=
\emptyset$ and $|N^3_G(u,v)|=1$, then $G'$ is same as $G$. So, without
loss of generality, assume that $\left| N^3_G(u,v) \right| > 1$ and
$N^2_G(u,v)\neq \emptyset$. Note that a vertex $x$ of $N^3_G(u,v)$ can be doubly
dominated by any two vertices from $N_G[x]\subseteq \{N^2_G(u,v)\cup
N^3_G(u,v) \cup \{u,v\}\}$. Again for any two vertices $x, y \in N^2_G(u,v)\cup
N^3_G(u,v) \cup \{u,v\}$, $N_G[x]\cup N_G[y] \subseteq N_G[u]\cup N_G[v]$. This
shows that we can double dominate each vertex of $N^3_G(u,v)$ in an optimal way
by selecting $u$ and $v$ only. This selection of $u$ and $v$ was forced by
the only vertex $w\in N^3_G(u,v)$ that remained in $G'$. We claim that $G$
contains a minimum double dominating set $D$ which does not contain any vertex
from $N^2_G(u,v)\cup N^3_G(u,v)$. First observe that there can not be three
or more vertices from $N^2_G(u,v)\cup N^3_G(u,v)$ in $D$. If it were, then we
could replace those three or more vertices by $u$ and $v$, thus contradicting
the minimality of $D$. Now for those two (or one) vertices from $N^2_G(u,v)\cup
N^3_G(u,v)$ in $D$, we can replace them by $u$ and (or) $v$.
Therefore, $G$ contains a minimum double dominating set $D$ which does not
contain any vertex from $N^2_G(u,v)\cup N^3_G(u,v)$. Clearly, this set
$D$ also forms a minimum double dominating set of $G'$. Hence,
$\gamma_2(G)=\gamma_2(G')$.
\end{proof}

In this reduction, for a pair of distinct vertices $u, v\in V$, we have
actually deleted at most $\min\{deg_G(u), deg_G(v)\}$ vertices. So, the
time taken is $\sum_{u,v\in V} \min\{deg_G(u), deg_G(v)\}$ for the whole reduction process. Since 
for a planar graph $\sum_{v\in V} deg_G(v)=O(n)$, where $n$ is the
number of vertices, we have the following lemma.

\begin{lemma}
For a planar graph having $n$ vertices, the reduction rule can be carried out in $O(n^3)$ time.
\end{lemma}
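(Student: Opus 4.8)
The plan is to multiply the number of vertex pairs by the time needed to process a single pair, keeping the per-pair cost at $O(n)$ by exploiting the sparsity of planar graphs.

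First I would note that there are $\binom{n}{2}=O(n^2)$ unordered pairs of distinct vertices, and the reduction rule is applied to each pair once. Hence it suffices to show that a single pair $u,v$ can be handled in $O(n)$ time.

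For a fixed pair $u,v$, the processing splits into three subtasks. \emph{(a) Computing} $N_G(u,v)=N_G(u)\cap N_G(v)$: using a global boolean array to mark the neighbors of $u$ and then scanning the neighbors of $v$, this costs $O(deg_G(u)+deg_G(v))=O(n)$ and simultaneously provides constant-time membership tests for the set $N_G(u,v)\cup\{u,v\}$. \emph{(b) Partitioning} $N_G(u,v)$ into $N^1_G(u,v)$, $N^2_G(u,v)$, $N^3_G(u,v)$: deciding whether a common neighbor $x$ lies in $N^1_G(u,v)$ amounts to scanning $N_G(x)$ for a vertex outside $N_G(u,v)\cup\{u,v\}$, and deciding membership in $N^2_G(u,v)$ is a second scan of $N_G(x)$; hence the partition costs $O\bigl(\sum_{x\in N_G(u,v)} deg_G(x)\bigr)$. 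Since $N_G(u,v)\subseteq V$, this is at most $\sum_{x\in V} deg_G(x)=2|E|$, and Euler's formula gives $|E|\le 3n-6$ for a simple planar graph, so subtask (b) costs $O(n)$. \emph{(c) Deleting} the relevant vertices when $N^3_G(u,v)\neq\emptyset$: at most $\min\{deg_G(u),deg_G(v)\}\le n$ vertices are removed, so this is $O(n)$ as well.

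Combining, each of the $O(n^2)$ pairs is processed in $O(n)$ time, yielding the claimed $O(n^3)$ bound. The step needing the most care is the per-pair cost of the partition in (b): one must argue that scanning the neighborhoods of \emph{all} common neighbors is not $O(n\cdot\max_w deg_G(w))$ but genuinely $O(n)$, and this is exactly where planarity enters, through the degree-sum bound $\sum_{v\in V} deg_G(v)=O(n)$. The remaining bookkeeping is routine.
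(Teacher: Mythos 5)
Your proof is correct and follows essentially the same route as the paper: $O(n^2)$ pairs, each handled in $O(n)$ time, with planarity entering through the degree-sum bound $\sum_{v}deg_G(v)=O(n)$. In fact your accounting is more complete than the paper's own one-line justification, which only tallies the deleted vertices ($\sum_{u,v}\min\{deg_G(u),deg_G(v)\}$) and leaves the cost of computing the partition into $N^1_G(u,v)$, $N^2_G(u,v)$, $N^3_G(u,v)$ implicit, whereas you bound that cost explicitly via $\sum_{x\in N_G(u,v)}deg_G(x)\le 2|E|=O(n)$.
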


\subsubsection{A linear kernel}
\label{sssec:linearkernel}
In this subsection, we show that the reduction rule given in the previous
section yields a linear kernel for \textsc{Double Domination Problem} in planar
graphs. For this proof, first we find a ``maximal region decomposition'' of the
vertices $V'$ of the reduced graph $G'=(V',E')$ and then we show that
$|V'|=O(\gamma_2(G'))$. We start with some definitions regarding maximal region
decomposition following Alber et al.~\cite{Alberkernel}.

\begin{defi}\label{def_region}
Let $G=(V,E)$ be a plane graph. A closed subset of the plane is called a region
$R(u,v)$ between two vertices $u,v$ if the following properties are met:
\begin{enumerate}
\item the boundary of $R(u,v)$ is formed by two simple paths $P$ and $Q$
between $u$ and $v$ of length at most two edges, and
\item all the vertices which are strictly inside the region $R(u,v)$ are from
$N_G(u)\cap N_G(v)$.
\end{enumerate}
\end{defi}

The definition of a region is slightly different from the definition given in~\cite{Alberkernel}, where all the vertices which are strictly inside the region $R(u,v)$ are from $N_G(u)\cup N_G(v)$. Note that by the above definition, paths of length one or two between $u$ and $v$ can form a region $R(u,v)$. For a region $R=R(u,v)$, let $\partial(R)$ denote the boundary of $R$ and $V(R)$ denote the vertices inside or on the
boundary of $R$, i.e., $V(R)= \{u\in V |~ u~ \mbox{ is inside } R \mbox{ or  on } \partial(R)\}$.

\begin{defi}\label{def_regiondecom}
Let $G=(V,E)$ be a plane graph and $D\subseteq V$. A $D$-region decomposition
of $G$ is a set $\mathcal{R}$ of regions between pairs of vertices in $D$
such that
\begin{enumerate}

\item 
for $R(u,v)\in \mathcal{R}$ no vertices of $D$ (except $u$ and $v$) lies
in $V(R(u,v))$, and

\item 
for two regions $R_1, R_2\in \mathcal{R}$, $(R_1\cap
R_2)\subseteq (\partial(R_1)\cup \partial(R_2))$, i.e., they can intersect only
at the vertices on the boundary.

\end{enumerate}

For a $D$-region decomposition $\mathcal{R}$, we define
$V(\mathcal{R})= \cup_{R\in \mathcal{R}} V(R)$. A $D$-region decomposition
$\mathcal{R}$ is called maximal if there is no region $R$ such that
$\mathcal{R'}=\mathcal{R}\cup R$ is a $D$-region decomposition, where
$V(\mathcal{R})$ is a strict subset of $V(\mathcal{R'})$.
\end{defi}

First we observe an important property of a maximal $D$-region decomposition.

\begin{lemma}\label{lem V=V(R)}
Let $G=(V,E)$ be a plane graph with a double dominating set $D$ and let
$\mathcal{R}$ be a maximal $D$-region decomposition. Then $V=V(\mathcal{R})$.
\end{lemma}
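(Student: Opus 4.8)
The plan is to argue by contradiction. Suppose $V \neq V(\mathcal{R})$ and fix a vertex $v \in V \setminus V(\mathcal{R})$. I would construct a single new (possibly degenerate) region $R$ with $v \in V(R)$, verify that $\mathcal{R}' := \mathcal{R} \cup \{R\}$ is again a $D$-region decomposition, and then note that $v \in V(R) \subseteq V(\mathcal{R}')$ while $v \notin V(\mathcal{R})$, so $V(\mathcal{R}) \subsetneq V(\mathcal{R}')$. This contradicts the maximality of $\mathcal{R}$ in Definition~\ref{def_regiondecom} and forces $V = V(\mathcal{R})$.

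The choice of $R$ is driven entirely by the double-domination hypothesis $|N_G[v] \cap D| \geq 2$, which guarantees that $v$ always has a partner in $D$ to anchor a boundary path. I would split into two cases. If $v \notin D$, then $v$ has two distinct neighbours $u_1, u_2 \in D$, and the length-two path $P = u_1\,v\,u_2$ joins two vertices of $D$; by the remark following Definition~\ref{def_region}, such a path bounds a degenerate region $R(u_1,u_2)$ with empty interior. If instead $v \in D$, then $v$ itself already contributes one vertex to $N_G[v]\cap D$, so there is a neighbour $u \in D$, and the single edge $vu$ (a path of length one between two vertices of $D$) bounds the degenerate region $R(v,u)$. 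In either case $v \in V(R)$, the endpoints lie in $D$, and the only possible non-$D$ vertex on $R$ is $v$ itself, so condition~(1) of Definition~\ref{def_regiondecom} holds and the interior-vertex requirement of Definition~\ref{def_region} is vacuous.

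The crux, which I expect to be the main obstacle, is verifying condition~(2): that $R$ meets every existing region only along boundaries, i.e.\ $R \cap R'' \subseteq \partial(R) \cup \partial(R'')$ for all $R'' \in \mathcal{R}$. Here I would invoke planarity. Since $v \notin V(\mathcal{R})$, the vertex $v$ lies strictly in the exterior of every region of $\mathcal{R}$. The boundary of $R$ consists only of the edges incident to $v$ chosen above, and because the embedding has no edge crossings, an edge joining the exterior vertex $v$ to an endpoint $u_i$ cannot enter the interior of any region $R''$: to do so it would have to cross $\partial(R'')$, which is impossible in a plane graph. The one point demanding care is ruling out that such an edge runs along or just inside a neighbouring region at a shared vertex; this is handled by the standard observation that, at a boundary vertex $u_i$, an incident edge leading towards the exterior stays in the exterior. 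Consequently $R$ shares with each $R''$ at most the endpoint vertices (each lying on $\partial(R)$), and condition~(2) follows; the degeneracy of $R$, which equals its own boundary, keeps this intersection analysis clean.

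With $\mathcal{R}'$ shown to be a valid $D$-region decomposition that strictly enlarges $V(\mathcal{R})$, the maximality of $\mathcal{R}$ is contradicted, so no such $v$ exists and $V = V(\mathcal{R})$, completing the proof.
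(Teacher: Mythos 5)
Your proposal is correct and follows essentially the same route as the paper: pick a vertex $v\notin V(\mathcal{R})$, use double domination to build a degenerate region on the edge $vu$ (if $v\in D$) or the path $u_1vu_2$ (if $v\notin D$), and contradict maximality. The only difference is that you spell out the verification of condition~(2) of Definition~\ref{def_regiondecom} via planarity, which the paper dismisses with ``clearly''; your added care there is sound.
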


\begin{proof}
Let $v\in V$ be a vertex such that $v\notin V(\mathcal{R})$. There can be two
cases -- $v \in D$ and $v \notin D$. First, let us assume that
$v\in D$. Since $D$ is a double dominating set of $G$, there exists another
vertex $x\in D$ such that $vx \in E$. Now, the path $P=(x,v)$ forms a region
$R$. Clearly $\mathcal{R}\cup R$ forms a $D$-region decomposition of $G$ which
contradicts the maximality of $\mathcal{R}$. Let us now consider the other
case $v\notin D$. Since $D$ is a double dominating set of $G$,
there exists $x, y\in D$ such that $vx, vy\in E$. In this case, the path
$P=(x,v,y)$ forms a region $R$. Here also,
$\mathcal{R}\cup R$ forms a $D$-region decomposition of $G$ which contradicts
the maximality of $\mathcal{R}$. Thus each vertex of $V$ is in $V(\mathcal{R})$,
that is, $V\subseteq V(\mathcal{R})$. Thus $V=V(\mathcal{R})$.
\end{proof}

It is obvious that, for a plane graph $G=(V,E)$ with a double dominating set
$D$, there exists a maximal $D$-region decomposition $\mathcal{R}$. Based on
Lemma \ref{lem V=V(R)}, we propose a greedy algorithm to compute a maximal
$D$-region decomposition, which is given in Algorithm \ref{maxregiondecomp}.
The algorithm basically ensures the properties of the region decomposition
mentioned in Definitions \ref{def_region} and \ref{def_regiondecom}.
\begin{algorithm}
\relsize{-1}{
\KwIn{A plane graph $G=(V,E)$ and a double dominating set $D\subseteq V$.}
\KwOut{A maximal $D$-region decomposition $\mathcal{R}$ of $G$.}
\Begin{
$V_{used}\leftarrow \emptyset$, $\mathcal{R}\leftarrow \emptyset$\;
\While {$V_{used}\neq V$}{
Select a vertex $x$ from $V\setminus V_{used}$\;
Consider the set $\mathcal{R}_x$ of all regions $S$ with the following properties:
\begin{enumerate}
\item $S$ is a region between $u$ and $v$, where $u,v \in D$.
\item $S$ contains $x$.
\item no vertex from $D\setminus \{u,v\}$ is in $V(S)$.
\item $(S\cup R)\subseteq (\partial(S)\cup \partial(R))$ for all $R\in \mathcal{R}$.
\end{enumerate}
Choose a region $S_x\in \mathcal{R}_x$ which is maximal in terms of vertices\;
$\mathcal{R}\leftarrow \mathcal{R}\cup \{S_x\}$\;
$V_{used}\leftarrow V_{used} \cup V(S_x)$\;
}
$\Return(\mathcal{R})$\;
}
\caption{REGION$\_$DECOMPOSITION$(G, D)$}  \label{maxregiondecomp} }
\end{algorithm}

Clearly Algorithm \ref{maxregiondecomp} output a maximal $D$-region decomposition in polynomial time.
Next, we show that for a given plane graph $G$ with a double dominating set $D$,
every maximal $D$-region decomposition contains at most $O(|D|)$ many
regions. For that purpose, we observe that a $D$-region decomposition induces a
graph in a very natural way.

\begin{defi}\label{def_indced graph}
The induced graph $G_{\mathcal{R}}=(V_{\mathcal{R}}, E_{\mathcal{R}})$ of a
$D$-region decomposition $\mathcal{R}$ of $G$ is the graph with possible
multiple edges which is defined as follows: $V_{\mathcal{R}}= D$ and $E_{\mathcal{R}}=\{(u,v)|$there is a region $R(u,v) \in \mathcal{R}$ between $u,v\in D \}$.
\end{defi}

Note that, since by Definition \ref{def_regiondecom} the regions of a $D$-region
decomposition do not intersect, the induced graph $G_{\mathcal{R}}$ of a
$D$-region decomposition $\mathcal{R}$ is a planar graph with multiple edges.
Next we bound the number of regions in a maximal $D$-region decomposition using the concept of \emph{thin planar graph} following Alber et al.~\cite{Alberkernel}.

\begin{defi}
A planar graph $G=(V,E)$ with multiple edges is thin if there exists a planar embedding such that if there are two edges $e_1, e_2$ between a pair of distinct vertices $v,w\in V$, then there must be two further vertices $u_1, u_2\in V$ which sit inside the two disjoint areas of the plane that are enclosed by $e_1$ and $e_2$.
\end{defi}

\begin{lemma}
Let $D$ be a double dominating set of a planar graph $G=(V,E)$. Then the induced graph 
$G_{\mathcal{R}}=(V_{\mathcal{R}}, E_{\mathcal{R}})$ of a maximal $D$-region decomposition 
$\mathcal{R}$ of $G$ is a thin planar graph.
\end{lemma}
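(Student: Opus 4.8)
The plan is to exhibit a single planar embedding of $G_{\mathcal{R}}$ --- the one inherited from the fixed plane embedding of $G$ by routing the edge representing a region $R(u,v)$ through the interior of $R$ --- and to verify the thinness condition directly for it. Planarity of this embedding has already been observed (the regions of $\mathcal{R}$ meet only in boundary vertices of $D$, so the representing edges do not cross), so it remains to handle a pair of parallel edges $e_1,e_2$ joining two vertices $u,w\in D$, arising from two distinct regions $R_1,R_2$ of $\mathcal{R}$ between $u$ and $w$. I would first reduce to the case that $R_1$ and $R_2$ are \emph{consecutive}, i.e.\ no region of $\mathcal{R}$ between $u$ and $w$ lies in the rotation between them at $u$ (and at $w$); this is harmless, since it suffices to produce the two required vertices for each adjacent pair. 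The union $e_1\cup e_2$ is then a Jordan curve through $u$ and $w$ that splits the plane into two open areas $A_1,A_2$, and the goal becomes: each $A_i$ contains a vertex of $D\setminus\{u,w\}$.

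To locate such a vertex, fix one area, say $A$, and suppose for contradiction that $A$ contains no vertex of $D$ in its interior. The key tool is Lemma~\ref{lem V=V(R)}: since $D$ is a double dominating set and $\mathcal{R}$ is maximal, $V=V(\mathcal{R})$, so every vertex of $G$ lying strictly inside $A$ must belong to some region of $\mathcal{R}$. Because the interiors of distinct regions are disjoint and $R_1,R_2$ were chosen consecutive, any region meeting the interior of $A$ must have \emph{both} of its defining $D$-vertices equal to $u$ or $w$ --- otherwise a new $D$-vertex would appear in $A$, or a third $u$--$w$ region would be nested between $R_1$ and $R_2$, both excluded. Hence the only vertices touching $A$ are common neighbours of $u$ and $w$ lying on the two boundary paths of $R_1,R_2$ that face $A$; these paths have length at most two by Definition~\ref{def_region}.

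The crux, and the step I expect to be the main obstacle, is turning this picture into a contradiction with maximality. The facing boundary paths $P$ (of $R_1$) and $Q$ (of $R_2$), each an $u$--$w$ path of length at most two, together with $A$ bound a closed subset whose interior vertices are all common neighbours of $u$ and $w$; this subset is itself a region $R(u,w)$ in the sense of Definition~\ref{def_region}. Its presence shows that $R_1$ could have been enlarged to absorb $A$ (equivalently, that the decomposition produced by Algorithm~\ref{maxregiondecomp}, which always selects a region maximal in vertices, would not leave such an unabsorbed area), forcing the conclusion that $A$ in fact contains a vertex outside $N_G(u)\cap N_G(w)$ --- necessarily a vertex of $D\setminus\{u,w\}$. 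Making this last implication airtight is where the argument really lives, since one must reconcile the global notion of maximality in Definition~\ref{def_regiondecom} with the local enlargement of a single region. Once a vertex of $D$ is produced in each of $A_1$ and $A_2$, the inherited embedding witnesses the thinness condition, and combined with the planarity already noted this shows that $G_{\mathcal{R}}$ is a thin planar graph.
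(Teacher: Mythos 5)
Your proposal follows essentially the same route as the paper: take the area $A$ enclosed by two parallel edges, assume it contains no vertex of $D$, deduce that every vertex inside $A$ is a common neighbour of the two endpoints (any other vertex would drag a second dominator into $A$ by planarity), and then contradict maximality by merging/enlarging regions to absorb $A$. The soft spot you flag --- reconciling the local enlargement with the global notion of maximality in Definition~\ref{def_regiondecom} --- is glossed over in exactly the same way in the paper's own proof, so your argument is at the same level of rigor as the original.
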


\begin{proof}
Let $R_1$ and $R_2$ be two regions between two vertices $v,w\in D$ and
$e_1$ and $e_2$ be the corresponding multiple edges between two vertices $v, w\in V_{\mathcal{R}}$. Let $A$ be an area enclosed by $e_1$ and $e_2$. If $A$ contains a vertex $u\in D$, we are done. Suppose there is no vertex of $D$ in $A$. Now consider the following cases:
\begin{description}
\item[There is no vertex from $V\setminus D$ in $A$:] In this case, by combining the regions $R_1$ and $R_2$, we can form a bigger region which is a contradiction to the maximality of $\mathcal{R}$.

\item[There is a vertex $x\in (V\setminus D)$ in $A$:] In this case, if $x$ is double dominated by $v$ and $w$, then again we can combine the two regions $R_1$ and $R_2$ to get a bigger region. So, assume that $x$ is dominated by some vertex $u$ other than $v$ and $w$. Since $G$ is planar, $u$ must be in $A$ which contradicts the fact that $A$ does not contain any vertex from $D$.
\end{description}

Hence, combining both the cases we see that $G_{\mathcal{R}}$ is a thin planar graph.
\end{proof}

In~\cite{Alberkernel}, it is proved that for a thin planar graph $G=(V,E)$, we have $|E|\leq 3|V|-6$. Hence we have the following lemma.

\begin{lemma}\label{lem R less 3D}
For a plane graph $G$ with a double dominating set $D$, every maximal
$D$-region decomposition $\mathcal{R}$ contains at most $3|D|$ many regions.
\end{lemma}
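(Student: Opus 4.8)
The plan is to read off the bound directly from the two results that immediately precede this statement, treating the lemma as an essentially immediate corollary. First I would observe that, by Definition~\ref{def_indced graph}, the induced graph $G_{\mathcal{R}}$ records each region of $\mathcal{R}$ as a single (possibly parallel) edge: for every region $R(u,v)\in\mathcal{R}$ there is an edge $(u,v)\in E_{\mathcal{R}}$, and two distinct regions between the same pair of vertices give two distinct parallel edges. Hence the number of regions equals the number of edges of the induced multigraph, i.e.\ $|\mathcal{R}| = |E_{\mathcal{R}}|$. This correspondence is exact because every region of a $D$-region decomposition is a region between two \emph{distinct} vertices of $D$, so no region is lost and none is double-counted.

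Next I would invoke the preceding lemma, which establishes that $G_{\mathcal{R}}$ is a \emph{thin} planar graph, together with the cited bound of Alber et al.~\cite{Alberkernel} stating that a thin planar graph $H=(V_H,E_H)$ satisfies $|E_H|\leq 3|V_H|-6$. Applying this to $G_{\mathcal{R}}$, whose vertex set is $V_{\mathcal{R}}=D$, yields
\[
|\mathcal{R}| = |E_{\mathcal{R}}| \leq 3|V_{\mathcal{R}}| - 6 = 3|D| - 6 \leq 3|D|,
\]
which is the desired bound (in fact slightly stronger).

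I do not expect any genuine obstacle here: the substantive geometric work---showing that a pair of parallel edges of $G_{\mathcal{R}}$ forces an extra vertex of $D$ inside one of the two enclosed areas, and hence that $G_{\mathcal{R}}$ is thin---has already been carried out in the previous lemma, and the edge bound for thin planar graphs is quoted from~\cite{Alberkernel}. The only point requiring a moment's care is the bookkeeping in the first step, namely confirming that the map from regions to edges is a bijection onto $E_{\mathcal{R}}$ counted with multiplicity, so that counting regions is literally the same as counting edges of $G_{\mathcal{R}}$. Once that identification is in place, the lemma follows at once from thinness and the linear edge bound.
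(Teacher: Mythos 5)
Your proposal is correct and follows exactly the paper's route: the paper likewise obtains this lemma as an immediate consequence of the preceding lemma (thinness of the induced multigraph $G_{\mathcal{R}}$) together with the cited bound $|E|\leq 3|V|-6$ for thin planar graphs from~\cite{Alberkernel}, identifying regions with edges of $G_{\mathcal{R}}$. No substantive difference.
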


Now, if we can bound the number of vertices that belongs to any region $R(u,v)$
of a maximal $D$-region decomposition $\mathcal{R}$ by some constant factor, we
are done. However, achieving this constant factor bound is not possible for any
plane graph $G$. But in a reduced plane graph, we can obtain this bound, as
shown in the following lemma.

\begin{lemma}\label{lem constnt vertex}
A region $R$ of a plane reduced graph contains at most $6$ vertices, that is,
$|V(R)|\leq 6$.
\end{lemma}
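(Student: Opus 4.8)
The plan is to split $V(R)$ into three groups and bound each: the two base vertices $u,v$; the internal vertices of the two boundary paths $P$ and $Q$; and the vertices lying strictly inside $R$. Since $P$ and $Q$ have length at most two, each contributes at most one internal vertex, so the boundary carries at most four vertices in total ($u$, $v$, and at most one on each path). Hence everything reduces to showing that the set $I$ of vertices strictly inside $R$ satisfies $|I|\le 2$. Note that, by Definition~\ref{def_region}, every vertex of $I$ as well as each internal boundary vertex lies in $N_G(u,v)$, so all vertices of $V(R)\setminus\{u,v\}$ are common neighbours of $u$ and $v$.

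First I would record a planarity fact: for any $x\in I$, every neighbour of $x$ lies in $V(R)$. Indeed, an edge leaving $x$ towards a vertex outside $R$ would have to cross the boundary $\partial(R)=P\cup Q$, which is impossible in a plane graph since $P$ and $Q$ consist of edges of $G$. Consequently $N_G(x)\subseteq \{u,v\}\cup N_G(u,v)$, so $N_G(x)\setminus(N_G(u,v)\cup\{u,v\})=\emptyset$ and therefore $x\notin N^1_G(u,v)$; that is, no interior vertex belongs to $N^1_G(u,v)$.

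Next I would exploit the fan structure of the common neighbours. All common neighbours in $V(R)$ are joined to both $u$ and $v$, so in the plane embedding they appear in a linear radial order around $u$ inside $R$, say $c_0,c_1,\dots,c_{k+1}$, where $c_1,\dots,c_k$ are the interior vertices and $c_0,c_{k+1}$ are the (at most two) internal boundary vertices. The edges from $u$ and $v$ cut $R$ into quadrilateral cells, so a common neighbour can be adjacent only to its immediate neighbours $c_{i-1},c_{i+1}$ in this order. Thus an interior vertex $c_i$ with $2\le i\le k-1$ has all of its non-$\{u,v\}$ neighbours interior, hence (by the previous paragraph) outside $N^1_G(u,v)$, so $c_i$ has no neighbour in $N^1_G(u,v)$ and therefore lies in $N^3_G(u,v)$.

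Finally I would invoke that $G$ is reduced, i.e.\ a fixed point of the Reduction Rule, which means that for the pair $(u,v)$ either $N^3_G(u,v)=\emptyset$, or $N^2_G(u,v)=\emptyset$ and $|N^3_G(u,v)|\le 1$. Suppose for contradiction $|I|=k\ge 3$. Then the middle vertex $c_2$ lies in $N^3_G(u,v)$, so $N^3_G(u,v)\neq\emptyset$; reducedness then forces $N^2_G(u,v)=\emptyset$. But each of $c_1,\dots,c_k$ is an interior vertex, hence in neither $N^1_G(u,v)$ nor $N^2_G(u,v)$, so all of them lie in $N^3_G(u,v)$, giving $|N^3_G(u,v)|\ge k\ge 3$ and contradicting $|N^3_G(u,v)|\le 1$. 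Hence $|I|\le 2$, and combining with the boundary count yields $|V(R)|\le 2+2+2=6$. The main obstacle is the third paragraph: making the radial/fan ordering and the ``only consecutive common neighbours can be adjacent'' claim precise from planarity, since the whole bound hinges on forcing the interior vertices into $N^3_G(u,v)$.
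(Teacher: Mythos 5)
Your proof is correct and takes essentially the same route as the paper's: both count at most four boundary vertices and then bound the interior by $2$ using the facts that interior vertices cannot lie in $N^1_{G'}(u,v)$, that reducedness forces $|N^3_{G'}(u,v)|\le 1$ (with $N^2_{G'}(u,v)=\emptyset$ whenever $N^3_{G'}(u,v)\neq\emptyset$), and that planarity limits which interior common neighbours can be adjacent to the boundary vertices $x,y$. The only differences are presentational: the paper splits into cases on $|N^3_{G'}(u,v)|$ while you derive a single contradiction from $|I|\ge 3$, and your radial/fan ordering spells out the planarity step that the paper merely asserts as ``not possible because of planarity.''
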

\begin{proof}
Let $R$ be the region between $u$ and $v$ and $\partial(R)=\{u, x, v, y\}$.
First note that $R$ contains at most two vertices from $N_{G'}^1(u,v)$ and the
only possibility of such vertices are $x$ and $y$. If there exists a vertex
$w \in N_{G'}^1(u,v)$, apart from $x$ and $y$, then $w$ has to have a neighbor
$z \notin N_{G'}(u,v)$. $z$ should be inside the region $R$ and hence, cannot be
double dominated. Now, because of the reduction rule, we can say that
$\left| N_{G'}^3 (u,v) \right| \le 1$. We consider the two cases:
\begin{description}
 \item[Case I ($\left| N_{G'}^3 (u,v) \right| = 1$):]
In this case, $\left| N_{G'}^2 (u,v) \right| = \emptyset$ by the reduction
rule. Hence, $|V(R)| \le 5$.
 \item[Case II ($\left| N_{G'}^3 (u,v) \right| = 0$):] In this case, we claim
that there can be at most two vertices from $N_{G'}^2 (u,v)$. If possible, let
$p,q,r \in N_{G'}^2 (u,v)$. Now all these three vertices must be adjacent
to either $x$ or $y$, which is not possible because of planarity. Hence,
in this case $|V(R)| \le 6$.
\end{description}
\end{proof}

\remove{
Otherwise, there exists a vertex
$z\in V(R)$ which is not doubly dominated. Now because of planarity, there
exists at most two vertices $p, q$ from $N_G^2(u,v)$ inside the region $R$. Also
note that if there exists a vertex $w$ from $N_G^3(u,v)$ inside $R$, then there
cannot be any vertex from $N_G^2(u,v)$ inside $R$ (because of the reduction
rule). Thus in the worst case, $V(R)$ contains at most $6$ vertices, that is,
$|V(R)|\leq 6$.}

First observe that, for a reduced graph $G'$ with a minimum double dominating
set $D$, by Lemma \ref{lem R less 3D}, there exists a maximal $D$-region
decomposition $\mathcal{R}$ with at most $3\cdot \gamma_2(G')$ regions. Also by
Lemma \ref{lem V=V(R)}, we have $V'=V(\mathcal{R})$ and by Lemma
\ref{lem constnt vertex}, we have for each region $|V(R)|\leq 6$. Thus we have
$|V'|=|V(\mathcal{R})|= |\cup_{R\in \mathcal{R}} V(R)| \leq \sum_{R\in
\mathcal{R}} \left|V(R) \right| \leq 6\cdot |\mathcal{R}| \leq 18\cdot
\gamma_{2}(G')$.
\remove{
\begin{eqnarray*}
|V|=|V(\mathcal{R})|= |\cup_{R\in \mathcal{R}} V(R)| \leq \sum_{R\in \mathcal{R}} |V(R)| \leq 6\cdot |\mathcal{R}|
 \leq 18\cdot \gamma_{2}(G).
\end{eqnarray*}}
Hence, we have the following theorem.
\begin{theo}\label{theodoublekernel}
For a reduced planar graph $G'=(V',E')$, we have $|V'|\leq 18\cdot
\gamma_{2}(G')$, that is, \textsc{Double Domination Problem} on planar graph
admits a linear kernel.
\end{theo}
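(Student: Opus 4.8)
The plan is to assemble the three preceding lemmas into a single counting bound, so the proof of the theorem itself is an aggregation step rather than a new argument. First I would fix a \emph{minimum} double dominating set $D$ of the reduced planar graph $G'$, so that $|D| = \gamma_2(G')$; such a set exists because the reduction rule preserves the double domination number (by the earlier preservation lemma), so a reduced instance still admits a double dominating set. Next I would run Algorithm~\ref{maxregiondecomp} on the pair $(G', D)$ to obtain a maximal $D$-region decomposition $\mathcal{R}$ of $G'$, which exists and is computable in polynomial time.

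The core is then a three-way combination. By Lemma~\ref{lem V=V(R)}, maximality forces every vertex of $G'$ to lie in some region, giving the \emph{exact} equality $V' = V(\mathcal{R})$; this is what lets me account for all of $V'$ through the regions. By Lemma~\ref{lem R less 3D}, the induced thin planar graph on $D$ has at most $3|D|$ edges, so $\mathcal{R}$ contains at most $3\gamma_2(G')$ regions. Finally, because $G'$ is reduced, Lemma~\ref{lem constnt vertex} caps each region at six vertices. Chaining these together I would write
\[
|V'| = |V(\mathcal{R})| = \Bigl| \bigcup_{R \in \mathcal{R}} V(R) \Bigr| \leq \sum_{R \in \mathcal{R}} |V(R)| \leq 6\,|\mathcal{R}| \leq 18 \cdot \gamma_2(G'),
\]
where the first inequality is the union bound, the second is Lemma~\ref{lem constnt vertex}, and the third is Lemma~\ref{lem R less 3D}. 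This is exactly the claimed linear kernel.

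Since the genuinely hard work is already packaged inside the three lemmas, I do not expect a serious obstacle at the level of the theorem; the only points requiring care are (i) that $D$ may be taken to be minimum, and (ii) that having $V' = V(\mathcal{R})$ as an equality (not merely a containment) is what makes the union bound legitimate as an accounting of \emph{all} vertices. The subtler risk lies upstream in Lemma~\ref{lem constnt vertex}: the constant $6$ depends on the reduction rule having emptied $N^2_{G'}(u,v)$ whenever $N^3_{G'}(u,v)\neq\emptyset$, together with planarity forbidding three distinct common neighbors all attached to a single boundary vertex. Were that per-region bound weaker, the kernel constant would degrade proportionally, so that lemma is the place where the final factor of $18$ is really decided.
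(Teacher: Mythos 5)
Your proposal is correct and follows exactly the paper's own argument: fix a minimum double dominating set $D$, build a maximal $D$-region decomposition, and chain Lemma~\ref{lem V=V(R)} ($V'=V(\mathcal{R})$), Lemma~\ref{lem R less 3D} ($|\mathcal{R}|\leq 3|D|$), and Lemma~\ref{lem constnt vertex} ($|V(R)|\leq 6$) to get $|V'|\leq 18\cdot\gamma_2(G')$. Nothing to add.
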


\subsection{Liar's and $k$-tuple domination}
We first show that the number of vertices in a plane graph, $|V| =
O(\gamma_{LR}(G))$. In this respect, first we note that both the results in
Lemma \ref{lem V=V(R)} and Lemma \ref{lem R less 3D} are valid for any plane
graph $G$ and any double dominating set $D$. Since every liar's dominating set
is also a double dominating set, similar type of results hold for any plane
graph $G$ and any liar's dominating set $L$. \remove{The only part where we use
the reduced graph is to show that the number of vertices in a region $R$ of a
$D$-region decomposition is bounded above by a constant.} We claim that the
number of vertices in a region $R$ of a $L$-region decomposition is bounded
above by a constant. Let $R$ be a region between $u$ and $v$ and
$\partial(R) = \{u,x,v,y\}$. Note that in $V(R)$ there are two
vertices ($u$ and $v$) from $L$. Now, if there exists two vertices
$p, q \in V(R) \setminus \partial(R)$, then for the pair $p$ and $q$ condition
(ii) of liar's domination is violated. Hence, there is at most one vertex
in $V(R) \setminus \partial(R)$. Therefore, $|V(R)| \le 5$.
\remove{ But for liar's
dominating set $L$, we can prove this directly for any plane graph. Because each
region contains only two vertices of $L$ which, in turn, implies that
$|V(R)|\leq 5$. Otherwise there exists a pair of vertices $u,v\in V(R)$ such
that $|(N_G[u]\cup N_G[v])\cap D|\leq 2$.} Thus we have
$|V|=|V(\mathcal{R})|= |\cup_{R\in \mathcal{R}} V(R)| \leq \sum_{R\in
\mathcal{R}} |V(R)| \leq 5\cdot |\mathcal{R}|
 \leq 15\cdot |L|\leq 15\cdot \gamma_{LR}(G).$
\remove{
\begin{eqnarray*}
|V|=|V(\mathcal{R})|= |\cup_{R\in \mathcal{R}} V(R)| \leq \sum_{R\in \mathcal{R}} |V(R)| \leq 5\cdot |\mathcal{R}|
 \leq 15\cdot |L|\leq 15\cdot \gamma_{LR}(G).
\end{eqnarray*}
}
Hence, we have the following theorem
\begin{theo}\label{theoliarker}
For a planar graph $G=(V,E)$, $|V|\leq 15\cdot \gamma_{LR}(G)$.
\end{theo}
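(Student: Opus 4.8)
The plan is to reuse the region-decomposition machinery developed for double domination, exploiting the fact that every liar's dominating set is in particular a double dominating set. Fix a minimum liar's dominating set $L$, so that $|L| = \gamma_{LR}(G)$, and regard it as a double dominating set $D = L$. Since Lemma~\ref{lem V=V(R)} and Lemma~\ref{lem R less 3D} are stated for an arbitrary double dominating set, they apply verbatim to $L$: a maximal $L$-region decomposition $\mathcal{R}$ satisfies $V = V(\mathcal{R})$ and contains at most $3|L|$ regions. Thus it suffices to bound the number of vertices contained in a single region, after which a summation over $\mathcal{R}$ finishes the argument.

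The heart of the proof --- and the only place where liar's domination genuinely differs from double domination --- is the per-region bound. For double domination this bound (Lemma~\ref{lem constnt vertex}) required first applying the reduction rule; the point I would stress is that condition~(ii) of liar's domination makes the reduction step unnecessary and yields the bound for \emph{every} plane graph. Let $R = R(u,v)$ be a region with $\partial(R) = \{u, x, v, y\}$. By property~1 of an $L$-region decomposition the only vertices of $L$ in $V(R)$ are $u$ and $v$, and by Definition~\ref{def_region} every vertex strictly inside $R$ lies in $N_G(u) \cap N_G(v)$. The key planarity observation is that a vertex $p$ strictly interior to the closed region $R$ can have neighbours only inside $V(R)$ --- an edge leaving $R$ would have to cross its boundary --- so $N_G[p] \cap L \subseteq \{u,v\}$. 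Hence if there were two interior vertices $p, q \in V(R) \setminus \partial(R)$, then $(N_G[p] \cup N_G[q]) \cap L \subseteq \{u,v\}$ would have size at most $2$, violating condition~(ii) for the pair $p,q$. Therefore at most one vertex lies strictly inside $R$, and $|V(R)| \le |\partial(R)| + 1 = 5$.

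Assembling the pieces, Lemma~\ref{lem V=V(R)} gives $V = V(\mathcal{R})$, the per-region bound gives $|V(R)| \le 5$, and Lemma~\ref{lem R less 3D} gives $|\mathcal{R}| \le 3|L|$, so that
\[
|V| = |V(\mathcal{R})| = \Bigl| \bigcup_{R \in \mathcal{R}} V(R) \Bigr| \le \sum_{R \in \mathcal{R}} |V(R)| \le 5 \cdot |\mathcal{R}| \le 15 \cdot |L| = 15 \cdot \gamma_{LR}(G).
\]
I expect the main obstacle to be a clean justification of the planarity claim that an interior vertex of a region keeps all its neighbours within the region; once that is pinned down, the rest is bookkeeping. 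A secondary point worth double-checking is that $u$ and $v$ are genuinely the only members of $L$ meeting $V(R)$, which is exactly property~1 of the region decomposition and therefore causes no trouble.
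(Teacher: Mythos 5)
Your proposal is correct and follows essentially the same route as the paper: reuse Lemma~\ref{lem V=V(R)} and Lemma~\ref{lem R less 3D} for the liar's dominating set viewed as a double dominating set, then show each region holds at most one interior vertex via condition~(ii), giving $|V(R)|\le 5$ and hence the factor $15$. Your added justification that an interior vertex's closed neighbourhood meets $L$ only in $\{u,v\}$ (by planarity and property~1 of the decomposition) is exactly the step the paper leaves implicit.
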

\remove{
In this subsection, we show similar result for \textsc{$k$-Tuple
Domination Problem} ($k\geq 3$) as we have done for \textsc{Liar's Domination
Problem} in the previous subsection. Since every $k$-tuple dominating set for
$k\geq 3$ is a liar's dominating set, it immediately follows from Theorem
\ref{theoliarker} that for a planar graph $G=(V,E)$, $|V|\leq 15\cdot
\gamma_{k}(G)$, where $k\geq 3$. But we can improve the constant by a little
margin as shown in the following theo.}
Since every $k$-tuple dominating set for $k\geq 3$ is a liar's dominating set,
we can use Theorem \ref{theoliarker}. But, we can improve the constant a little
bit.
\begin{theo}\label{theotupker}
For a planar graph $G=(V,E)$, $|V|\leq 12\cdot \gamma_{k}(G)$, where $k\geq 3$.
\end{theo}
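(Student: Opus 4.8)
The plan is to run the region-decomposition argument exactly as in the proof of Theorem~\ref{theoliarker}, changing only the per-region vertex count. Since $k\ge 3$, every $k$-tuple dominating set is in particular a double dominating set, so Lemma~\ref{lem V=V(R)} and Lemma~\ref{lem R less 3D} apply unchanged. Concretely, I would fix a minimum $k$-tuple dominating set $D$, so that $|D|=\gamma_k(G)$, take a maximal $D$-region decomposition $\mathcal{R}$, and record the two facts $V=V(\mathcal{R})$ and $|\mathcal{R}|\le 3|D|=3\gamma_k(G)$.

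The only new ingredient is to sharpen the bound on a single region from $|V(R)|\le 5$, as used for liar's domination, to $|V(R)|\le 4$. Let $R$ be a region between $u,v\in D$ with $\partial(R)=\{u,x,v,y\}$. By the definition of a region, $V(R)$ consists of these at most four boundary vertices together with vertices lying strictly inside $R$, and every such interior vertex lies in $N_G(u)\cap N_G(v)$. I claim that, for $k\ge 3$, the region $R$ has \emph{no} strictly interior vertex. Suppose $z$ lies strictly inside $R$. Property~1 of a $D$-region decomposition forbids any vertex of $D$ other than $u,v$ in $V(R)$, so $z\notin D$ (and also $x,y\notin D$). By planarity every neighbour of the interior vertex $z$ must itself lie in $V(R)$, since an edge leaving $R$ would have to cross the boundary paths $P$ and $Q$. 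Hence $N_G[z]\cap D=N_G(z)\cap D\subseteq\{u,v\}$, giving $|N_G[z]\cap D|\le 2$, which contradicts $|N_G[z]\cap D|\ge k\ge 3$. Thus $V(R)=\partial(R)$ and $|V(R)|\le 4$.

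Combining the three facts then yields
\[
|V|=|V(\mathcal{R})|=\Bigl|\bigcup_{R\in\mathcal{R}}V(R)\Bigr|\le\sum_{R\in\mathcal{R}}|V(R)|\le 4\cdot|\mathcal{R}|\le 12\cdot\gamma_k(G),
\]
as required. The main obstacle, and the only place where the argument genuinely differs from Theorem~\ref{theoliarker}, is justifying cleanly that every neighbour of a strictly interior vertex stays in $V(R)$; this planarity fact is exactly what upgrades the liar's bound of at most one interior vertex (where $z$ need only be doubly dominated) to zero interior vertices (where $z$ must be triply dominated yet can see only $u$ and $v$ from $D$).
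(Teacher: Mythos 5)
Your proposal is correct and follows essentially the same route as the paper: fix a minimum $k$-tuple dominating set, reuse Lemma~\ref{lem V=V(R)} and Lemma~\ref{lem R less 3D}, and sharpen the per-region count to $|V(R)|\le 4$ because a strictly interior vertex could only be dominated by $u$ and $v$, contradicting $k\ge 3$. Your write-up merely makes explicit the planarity step that the paper leaves implicit.
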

\begin{proof}
Let $D$ be a minimum $k$-tuple dominating set of $G=(V,E)$. Since every
$k$-tuple dominating set is a double dominating set, by Lemma \ref{lem R less
3D} we can form a maximal $D$-region decomposition $\mathcal{R}$ of $G$
containing at most $3\cdot |D|$ many regions. Again by Lemma \ref{lem V=V(R)},
we have $V=V(\mathcal{R})$. Since each region contains only two vertices of $D$,
we have $|V(R)|\leq 4$. Otherwise there exists one vertex in $V(R)$ which is not
dominated by $k$ vertices of $D$. Hence $|V|\leq 4 \cdot |\mathcal{R}|\leq
12\cdot |D|\leq 12\cdot \gamma_{k}(G)$.
\end{proof}

\section{Linear kernels for bounded genus graphs}
In this section, we extend our results to bounded genus graphs to show that \textsc{$k$-Tuple Domination Problem} and \textsc{Liar's Domination Problem} admit a linear kernel. The notations in this section follow Section \ref{subsec:graphsurfacenotation}.

For double domination problem, we apply the same reduction rule on a graph $G$ with bounded genus $g$ to obtain the reduced graph $G'$. Note that the reduced graph $G'$ is also of bounded genus $g$. Let $G=(V,E)$ be an $n$-vertex $\Sigma$-embedded graph. It is easy to observe that, since $\sum_{v\in V}deg_G(v)=O(n+\textup{eg}(\Sigma))$, the reduced graph $G'=(V',E')$ can be computed in $O(n^3 + n^2\cdot \textup{eg}(\Sigma))$ time, where $|V|=n$. Next we show that $|V'|=O(\gamma_2(G')+g)$ which implies \textsc{Double Domination Problem} admits a linear kernel in bounded genus graphs.

To prove the above, we consider two cases. In the first case, we assume that the reduced $\Sigma$-embedded graph has representativity strictly greater than $4$. In the case when $rep(G)\leq 4$, we go by induction on the Euler genus of surface $\Sigma$. In the first case, the graphs are locally planar, i.e., all the contractable noose are of length less or equal to $4$. Since the boundary of the regions in planar case is less than or equal to $4$, the boundary $\partial(R)$ of any region $R$ of a $D$-region decomposition $\mathcal{R}$ is contractible. Hence the proof in the planar case can be extended in this case. Hence we have the following lemma.

\begin{lemma}\label{lemdoublerep4}
Let $G'=(V',E')$ be a reduced $\Sigma$-embedded graph where $\textup{rep}(G')>4$. Then $|V'|\leq 18(\gamma_2(G')+ \textup{eg}(\Sigma))$.
\end{lemma}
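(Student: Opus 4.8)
The plan is to mirror, almost verbatim, the planar argument that gave Theorem~\ref{theodoublekernel}, replacing only the single place where planarity was used quantitatively, namely the bound on the number of regions. Let $D$ be a minimum double dominating set of $G'$, so $|D|=\gamma_2(G')$, and fix a maximal $D$-region decomposition $\mathcal{R}$ of the $\Sigma$-embedded graph $G'$. The target inequality will follow from the two-step estimate $|V'| = |V(\mathcal{R})| \le 6\cdot|\mathcal{R}|$ together with a surface version of Lemma~\ref{lem R less 3D} asserting $|\mathcal{R}| \le 3\bigl(|D| + \textup{eg}(\Sigma)\bigr)$; multiplying gives exactly $|V'|\le 18\bigl(\gamma_2(G')+\textup{eg}(\Sigma)\bigr)$.

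First I would cash in the hypothesis $\textup{rep}(G')>4$ to recover local planarity. Each region boundary $\partial(R)$ is formed by two paths of length at most two edges (Definition~\ref{def_region}), hence is a noose of length at most $4$; since the representativity exceeds $4$, every such noose is contractible and therefore bounds a closed disk. Consequently each region, together with its interior, sits inside a disk and behaves exactly as in the plane. With this observation the proofs of Lemma~\ref{lem V=V(R)} and Lemma~\ref{lem constnt vertex} go through unchanged: a vertex missing from $V(\mathcal{R})$ would let us adjoin a length-one or length-two region and contradict maximality, so $V'=V(\mathcal{R})$; and the case analysis on $N^3_{G'}(u,v)$ and $N^2_{G'}(u,v)$ inside a disk region gives $|V(R)|\le 6$ for every $R\in\mathcal{R}$. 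This establishes $|V'|\le 6\cdot|\mathcal{R}|$.

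The remaining task is to bound the number of regions. As in the planar case, the induced graph $G_{\mathcal{R}}$ (Definition~\ref{def_indced graph}) is a multigraph on vertex set $D$ embedded in $\Sigma$, with one edge per region, and the same local argument that proved thinness in the plane proves it here: between two parallel edges the enclosed disk must contain a witness vertex, otherwise the corresponding regions could be merged, contradicting maximality. I would then replace the bound $|E|\le 3|V|-6$ for thin planar graphs by its surface analogue. Thinness forbids bigon faces (the area between two parallel edges is subdivided by a witness vertex and so is not a single face), so every face of $G_{\mathcal{R}}$ in $\Sigma$ has boundary length at least three, giving $2|E_{\mathcal{R}}|\ge 3F$. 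Feeding this into Euler's formula $|D|-|E_{\mathcal{R}}|+F = 2-\textup{eg}(\Sigma)$ yields $|E_{\mathcal{R}}|\le 3|D|-6+3\,\textup{eg}(\Sigma)$, whence $|\mathcal{R}| = |E_{\mathcal{R}}| \le 3\bigl(|D|+\textup{eg}(\Sigma)\bigr)$.

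I expect the main obstacle to be the third step, the surface generalization of the thin-graph edge count. The delicate points are: confirming that thinness genuinely rules out faces of boundary length less than three on a surface (so that $2|E_{\mathcal{R}}|\ge 3F$ is legitimate even in the presence of multiple edges), and justifying that the entire region decomposition embeds faithfully in $\Sigma$ without any wrap-around pathology, which is precisely where $\textup{rep}(G')>4$ is needed to guarantee that region boundaries bound disks. The rest is the planar bookkeeping of Lemmas~\ref{lem V=V(R)}, \ref{lem R less 3D}, and~\ref{lem constnt vertex} carried over verbatim.
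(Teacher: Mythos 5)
Your proposal follows essentially the same route as the paper: use $\textup{rep}(G')>4$ to make region boundaries (nooses of length at most $4$) contractible so that Lemmas~\ref{lem V=V(R)} and~\ref{lem constnt vertex} carry over, and bound $|\mathcal{R}|\le 3(|D|+\textup{eg}(\Sigma))$ via the thinness of $G_{\mathcal{R}}$ and the surface Euler formula. In fact you supply more detail than the paper's own proof, which simply asserts the region bound; your Euler-formula derivation is the intended justification.
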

\begin{proof}
Let $D$ be a double dominating set of $G'$ and $\mathcal{R}$ is a maximal
$D$-region decomposition of $G'$. Forming a induced graph, $G_\mathcal{R}$ as in
case of double domination problem in planar graphs (Section
\ref{sssec:linearkernel}), we have $|\mathcal{R}|\leq 3\cdot(|D|+ eg(\Sigma))$.
Also, in this case, every vertex of $V'$ belongs to at least one region of
$\mathcal{R}$ and for a region $R$, $|V(R)| \le 6$.
Hence, we have $|V'|\leq 18(\gamma_2(G')+ \textup{eg}(\Sigma))$.
\end{proof}

Next consider the case where $3\leq \textup{rep}(G')\leq 4$. For a noose $N$ in
$\Sigma$, we define the graph $G_N=(V_N,E_N)$ as follows. First we consider the
graph $\mathcal{G}$ obtained from $G'=(V',E')$ by cutting along
$N$. Then for every $v\in N\cap V'$ if $v^i$, $i = 1, 2$, is not adjacent to a
pendant vertex, then we add a pendant vertex $u^i$ adjacent to $v^i$ to
form $G_N$. Clearly $G_N$ has genus less than that of $G'$. If we add all the
vertices of $V_N \setminus V'$ to a double dominating set $D$ of $G'$,
then we clearly obtain a double dominating set of $G_N$ and as,
$\textup{rep}(G')\leq 4$, $|V_N \setminus V'| \le 16$. Hence, $\gamma_2(G_N)\leq
\gamma_2(G')+|N\cap V'|\leq \gamma_2(G')+ 16$. Also note that if $G'$ is a
reduced graph, then so is $G_N$. Using these facts, we prove that \textsc{Double
Domination Problem} possesses a linear kernel when restricted to graphs with
bounded genus.

\begin{lemma}\label{lemdoublegenus}
For any reduced $\Sigma$-embedded graph $G'=(V',E')$ with
\textup{eg}$(\Sigma)\geq 1$, $|V'|\leq 18(\gamma_2(G')+ 32\cdot
\textup{eg}(\Sigma)-16)$.
\end{lemma}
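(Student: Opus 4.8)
The plan is to proceed by induction on $\textup{eg}(\Sigma)$, treating the two representativity regimes separately inside each step. If $\textup{rep}(G')>4$, then Lemma~\ref{lemdoublerep4} already gives $|V'|\le 18(\gamma_2(G')+\textup{eg}(\Sigma))$; since $\textup{eg}(\Sigma)\ge 1$ implies $\textup{eg}(\Sigma)\le 32\,\textup{eg}(\Sigma)-16$, this is even stronger than the claimed inequality and nothing more is needed. The substance of the argument is therefore the case $\textup{rep}(G')\le 4$, which I would handle using the cutting construction described just before the lemma.

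So, assume $\textup{rep}(G')\le 4$; as $\textup{eg}(\Sigma)\ge 1$ the surface is not the sphere, so $G'$ has a non-contractible noose, and by assumption one of length at most $4$. I would fix such a noose $N$ and pass to the graph $G_N$, recording the facts already established above: $G_N$ is reduced, $\gamma_2(G_N)\le \gamma_2(G')+16$, and $|V'|\le|V_N|$ (cutting turns each of the at most four noose vertices into two vertices and then attaches pendants, so no original vertex is discarded). Applying Lemma~\ref{lemcutting} to the graph obtained by cutting along $N$ forces one of two alternatives, and in both the Euler genus strictly decreases, which is exactly what makes the induction well-founded.

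In the disconnecting alternative, $G_N=G_1\sqcup G_2$ with $G_i$ embeddable in $\Sigma_i$, where $\textup{eg}(\Sigma_1)+\textup{eg}(\Sigma_2)=\textup{eg}(\Sigma)$ and each $\textup{eg}(\Sigma_i)\ge 1$. Since a disjoint union is reduced precisely when its components are, I would apply the induction hypothesis to each $G_i$ and combine it with $\gamma_2(G_N)=\gamma_2(G_1)+\gamma_2(G_2)$ to obtain $|V_N|\le 18(\gamma_2(G_N)+32\,\textup{eg}(\Sigma)-32)$; feeding in $\gamma_2(G_N)\le\gamma_2(G')+16$ and $|V'|\le|V_N|$ then yields exactly $|V'|\le 18(\gamma_2(G')+32\,\textup{eg}(\Sigma)-16)$. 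In the genus-dropping alternative, $G_N$ embeds in a surface of Euler genus at most $\textup{eg}(\Sigma)-1$. If this genus is still at least $1$, the induction hypothesis applied to the reduced graph $G_N$ gives $|V_N|\le 18(\gamma_2(G')+32\,\textup{eg}(\Sigma)-32)$, which is within the target; if the genus is $0$, then $G_N$ is a reduced planar graph and Theorem~\ref{theodoublekernel} gives $|V_N|\le 18\gamma_2(G_N)\le 18(\gamma_2(G')+16)$, and because $\textup{eg}(\Sigma)\ge 1$ makes $16\le 32\,\textup{eg}(\Sigma)-16$, this too lies within the claimed bound.

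I expect the genus bookkeeping, rather than any isolated inequality, to be the delicate part: one must check that both alternatives of Lemma~\ref{lemcutting} genuinely lower the Euler genus so that the recursion terminates, that each component $G_i$ inherits reducedness so that the induction hypothesis is legitimately applicable to it, and that the additive constants match up---specifically that the $+16$ slack from $\gamma_2(G_N)\le\gamma_2(G')+16$ cancels against the extra $-32$ (or $-16$) produced when the genus term is split or decremented, leaving the stated constant $-16$.
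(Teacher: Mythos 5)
Your proposal is correct and follows essentially the same route as the paper: induction on the Euler genus, the $\textup{rep}(G')>4$ case dispatched by Lemma~\ref{lemdoublerep4}, and the $\textup{rep}(G')\le 4$ case handled by cutting along a short non-contractible noose and invoking Lemma~\ref{lemcutting}, with the two alternatives balanced against the slack $\gamma_2(G_N)\le\gamma_2(G')+16$ exactly as you describe. The only cosmetic difference is that the paper treats $\textup{eg}(\Sigma)=1$ as an explicit base case (where cutting must yield a planar graph, handled by Theorem~\ref{theodoublekernel}), whereas you fold that into the genus-dropping alternative; the bookkeeping is identical.
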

\begin{proof}
We prove this result by induction on $\textup{eg}(\Sigma)$. Suppose
$\textup{eg}(\Sigma)=1$. If $\textup{rep}(G')>4$, then the result follows from
Lemma \ref{lemdoublerep4}. Otherwise Lemma \ref{lemcutting} implies that the
graph $G_N$, described above, is planar. Hence by Theorem
\ref{theodoublekernel}, we have $|V_N|\leq 18\cdot \gamma_2(G_N)$. Thus
$|V'|\leq |V_N|\leq 18(\gamma_2(G')+16)$.

Assume that $|V'|\leq 18(\gamma_2(G')+ 32\cdot \textup{eg}(\Sigma)-16)$ for any
$\Sigma$-embedded reduced graph $G'$ with eg$(\Sigma)\leq g-1$. Consider a
reduced $\Sigma$-embedded graph $G'$ with eg$(\Sigma)= g$. Now if rep$(G')>4$,
then again by Lemma \ref{lemdoublerep4}, we are done. Hence assume that
rep$(G')\leq 4$. By Lemma \ref{lemcutting}, either $G_N$ is the disjoint union
of graphs $G_1$ and $G_2$ that can be embedded in surfaces $\Sigma_1$ and
$\Sigma_2$ such that eg$(\Sigma)=$ eg$(\Sigma_1)+$ eg$(\Sigma_2)$ and
eg$(\Sigma_i) > 0$, $i = 1, 2$ (this is the case when $N$ is surface separating
curve), or $G_N$ can be embedded in a surface with Euler genus strictly smaller
than eg$(\Sigma)$ (this holds when N is not surface separating).

Let us consider the case where $G_N$ is the disjoint union of graphs $G_1=(V_1,E_1)$ and 
$G_2=(V_2,E_2)$ that can be embedded in surfaces $\Sigma_1$ and $\Sigma_2$. Since 
eg$(\Sigma_i)\leq g-1$ for $i=1,2$, we can apply the induction hypothesis on $G_i$. Thus we have,
\begin{align*}
|V'|\leq |V_N|&= |V_1|+ |V_2|& \\
&\leq \sum_{i=1}^{2} 18(\gamma_2(G_i)+ 32\cdot
\textup{eg}(\Sigma_i)-16)&\\
&\leq 18(\gamma_2(G_N)+ 32\cdot \textup{eg}(\Sigma)-32)& \text{as $G_1$ and $G_2$ are disjoint}\\
&\leq 18(\gamma_2(G')+ 32\cdot \textup{eg}(\Sigma)-16).&
\end{align*}

Next we consider the case where $G_N$ can be embedded in a surface $\Sigma'$
with Euler genus strictly smaller than $g$. In this case too, we can
apply induction hypothesis on $G_N$. Thus we have,
\begin{align*}
|V'|\leq |V_N| &\leq 18(\gamma_2(G_N)+ 32\cdot \textup{eg}(\Sigma')-16)&\\
&\leq 18(\gamma_2(G_N)+ 32\cdot (\textup{eg}(\Sigma)-1)- 16)&\\
&\leq 18(\gamma_2(G')+ 32\cdot \textup{eg}(\Sigma) -32)& \text{as $\gamma_2(G_N) \leq \gamma_2(G')+ 16$}\\
&\leq 18(\gamma_2(G')+ 32\cdot \textup{eg}(\Sigma)-16).&
\end{align*}

Thus we have proved that, $|V'|\leq 18(\gamma_2(G')+ 32\cdot
\textup{eg}(\Sigma)-16)$ for every $\Sigma$-embedded graph $G'=(V',E')$.
\end{proof}

Hence by Lemma \ref{lemdoublerep4} and Lemma \ref{lemdoublegenus}, we have the main result of this subsection.
\begin{theo}
\textsc{Double Domination Problem} admits a linear kernel for bounded genus graphs.
\end{theo}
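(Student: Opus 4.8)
The plan is to note that this theorem is essentially a corollary of Lemma~\ref{lemdoublerep4} and Lemma~\ref{lemdoublegenus} together with the planar case of Theorem~\ref{theodoublekernel}, combined with the standard device that converts a bound on $|V'|$ expressed in terms of $\gamma_2(G')$ into a genuine kernel bounded in terms of the parameter $p$.

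First I would record the properties of the reduction rule on bounded genus graphs. As already observed, the rule preserves the double domination number, so $\gamma_2(G')=\gamma_2(G)$, it runs in $O(n^3+n^2\cdot \textup{eg}(\Sigma))$ time on an $n$-vertex $\Sigma$-embedded graph, and the reduced graph $G'$ remains embeddable in $\Sigma$ and hence has the same bounded genus $g$. Consequently an instance $\langle G,p\rangle$ with $\textup{eg}(G)\leq g$ is transformed in polynomial time into an equivalent instance $\langle G',p\rangle$ of genus at most $g$.

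Next I would split on the Euler genus of the embedding surface $\Sigma$ of $G'$. If $G'$ is planar, i.e.\ $\textup{eg}(\Sigma)=0$, then Theorem~\ref{theodoublekernel} gives $|V'|\leq 18\,\gamma_2(G')$. Otherwise $\textup{eg}(\Sigma)\geq 1$ and Lemma~\ref{lemdoublegenus} yields $|V'|\leq 18(\gamma_2(G')+32\cdot \textup{eg}(\Sigma)-16)$; this lemma in turn draws on Lemma~\ref{lemdoublerep4} for the high-representativity regime and on the induction on Euler genus driven by the cutting Lemma~\ref{lemcutting} otherwise. Using $\textup{eg}(\Sigma)\leq g$, both regimes combine into the single bound $|V'|\leq 18(\gamma_2(G')+32g)=O(\gamma_2(G')+g)$.

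Finally I would turn this size bound into a linear kernel. For a fixed genus $g$ the quantity $18(\gamma_2(G')+32g)$ is linear in $\gamma_2(G')=\gamma_2(G)$. Since the decision question asks for a double dominating set of size at most $p$, I would compare $|V'|$ with the threshold $18(p+32g)$: if $|V'|$ exceeds it then necessarily $\gamma_2(G')>p$, so $\langle G',p\rangle$ is a \textsc{No}-instance and I output a trivial constant-size \textsc{No}-instance; otherwise $|V'|\leq 18(p+32g)=O(p)$ and $\langle G',p\rangle$ itself is the reduced instance, of size linear in $p$. Since the whole reduction is parameter-preserving and equivalence-preserving, this is a linear kernel. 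The genuine mathematical content is already contained in Lemma~\ref{lemdoublegenus}; the only remaining step is the routine no-instance thresholding, so I expect no real obstacle here, the only care being to collapse the two genus regimes and the additive constant into a single clean linear expression in $p$ and $g$.
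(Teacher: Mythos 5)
Your proposal is correct and follows essentially the same route as the paper: the paper also obtains the theorem directly from Lemma~\ref{lemdoublerep4} and Lemma~\ref{lemdoublegenus} (with Theorem~\ref{theodoublekernel} covering the planar base case), leaving the routine conversion of the bound $|V'|\leq 18(\gamma_2(G')+32\cdot\textup{eg}(\Sigma))$ into a kernel via thresholding against $p$ implicit. Your write-up merely makes that last standard step explicit, which is fine.
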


For liar's domination problem, by Theorem \ref{theoliarker}, we have $|V|\leq 15\cdot \gamma_{LR}(G)$ in case of a planar graph $G=(V,E)$. Proceeding exactly in the same way as in the case of double domination, we can have the following theorem for a $\Sigma$-embedded graph $G$.

\begin{theo}\label{theoliargenus}
Let $G=(V,E)$ be a $\Sigma$-embedded graph. Then $|V|\leq 15(\gamma_{LR}(G)+ 32\cdot \textup{eg}(\Sigma))$.
\end{theo}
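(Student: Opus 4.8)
The plan is to mirror the two-case analysis already carried out for \textsc{Double Domination Problem} on bounded genus graphs, replacing the region-size bound $6$ by the liar's bound $5$. Since every liar's dominating set $L$ is in particular a double dominating set, Lemmas \ref{lem V=V(R)} and \ref{lem R less 3D} apply verbatim to $L$: a maximal $L$-region decomposition $\mathcal{R}$ covers all of $V$, and the induced thin-planar-graph argument (transferred to surfaces exactly as in Lemma \ref{lemdoublerep4}) yields $|\mathcal{R}| \le 3(|L| + \textup{eg}(\Sigma))$. The ingredient special to liar's domination is that each region is small; this was already established in the discussion preceding Theorem \ref{theoliarker}, where condition (ii) forces at most one vertex of $V(R)\setminus\partial(R)$ and hence $|V(R)| \le 5$. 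A pleasant simplification over the double-domination case is that this bound holds for \emph{arbitrary} plane graphs, so no reduction rule and no reduced-graph bookkeeping are needed in the induction.

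First I would treat the locally planar case $\textup{rep}(G)>4$. Every region boundary is then a contractible noose, so the planar region-decomposition machinery transfers directly, and combining the two estimates above gives
\[
|V| = |V(\mathcal{R})| \le \sum_{R \in \mathcal{R}} |V(R)| \le 5|\mathcal{R}| \le 15(\gamma_{LR}(G) + \textup{eg}(\Sigma)),
\]
which is well within the claimed bound.

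Next I would handle $\textup{rep}(G)\le 4$ by induction on $\textup{eg}(\Sigma)$, following Lemma \ref{lemdoublegenus} with the constant $18$ replaced by $15$. Cutting along a non-contractible noose $N$ with $|N \cap V| \le \textup{rep}(G) \le 4$ introduces at most $16$ new (split and pendant) vertices, producing a graph $G_N$ of strictly smaller Euler genus which, by Lemma \ref{lemcutting}, either splits into two lower-genus pieces or embeds in a lower-genus surface. The base case $\textup{eg}(\Sigma)=1$ makes $G_N$ planar, so Theorem \ref{theoliarker} applies, and the inductive step telescopes to $15(\gamma_{LR}(G)+32\,\textup{eg}(\Sigma)-16)$. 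Together with the locally planar estimate $15(\gamma_{LR}(G)+\textup{eg}(\Sigma))$, both subcases are dominated by the clean bound $15(\gamma_{LR}(G)+32\,\textup{eg}(\Sigma))$ asserted in the theorem.

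The main obstacle is the liar's analog of the easy fact, used freely for double domination, that adjoining the new vertices to a dominating set of $G$ yields a dominating set of $G_N$. One must verify that if $L$ is a liar's dominating set of $G$ then $L \cup (V_N \setminus V)$ is a liar's dominating set of $G_N$, so that $\gamma_{LR}(G_N) \le \gamma_{LR}(G)+16$ and the recurrence closes. Condition (i) is immediate since vertices are only added; the delicate point is condition (ii) for pairs involving the split vertices $v^i$ and their pendants $u^i$, whose closed neighborhoods are small. Checking that every such pair still satisfies $|(N_{G_N}[x] \cup N_{G_N}[y]) \cap (L \cup (V_N\setminus V))| \ge 3$ — which in particular forces each pendant's unique neighbor into the set — is the technical heart that justifies the additive constant and lets the induction go through.
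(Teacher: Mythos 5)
Your proposal is correct and takes essentially the same approach as the paper, whose entire proof is the remark that the double-domination argument (Lemmas \ref{lemdoublerep4} and \ref{lemdoublegenus}) carries over verbatim with the region bound $6$ replaced by $5$ and hence the constant $18$ by $15$. You actually go a step further than the paper by isolating the one point it leaves implicit --- that $L\cup(V_N\setminus V)$ is still a liar's dominating set of $G_N$, giving $\gamma_{LR}(G_N)\leq\gamma_{LR}(G)+16$ --- and that check does succeed because every split vertex and every pendant is placed in the set.
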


Since for any graph that admits a $k$-tuple dominating set ($k\geq 3$), $\gamma_{LR}(G)\leq \gamma_{k}(G)$, we have the following corollary of Theorem \ref{theoliargenus}.
\begin{coro}
For a $\Sigma$-embedded graph $G=(V,E)$, $|V|\leq 15(\gamma_{k}(G)+ 32\cdot \textup{eg}(\Sigma))$.
\end{coro}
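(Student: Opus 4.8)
The plan is to derive the corollary directly from Theorem~\ref{theoliargenus} via a comparison of domination numbers, so that all the genus-dependent counting is inherited rather than redone. First I would note that the hypothesis implicitly requires $G$ to admit a $k$-tuple dominating set, so that $\gamma_k(G)$ is well defined; in particular $\delta(G)\geq k-1\geq 2$. Let $D$ be a minimum $k$-tuple dominating set, so that $|D|=\gamma_k(G)$ and $|N_G[v]\cap D|\geq k\geq 3$ for every $v\in V$.

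Next I would verify that this $D$ is in fact a liar's dominating set, recovering the inequality $\gamma_{LR}(G)\leq\gamma_k(G)$ already stated in the introduction. Since $|N_G[v]\cap D|\geq 3$ for all $v$, condition (i) of liar's domination (which only demands at least $2$) is trivially met; and for any pair of distinct vertices $u,v$ we have $|(N_G[u]\cup N_G[v])\cap D|\geq |N_G[u]\cap D|\geq 3$, so condition (ii) holds as well. Hence $D$ is a liar's dominating set and $\gamma_{LR}(G)\leq |D|=\gamma_k(G)$. This is precisely the observation that every triple (and hence every $k$-tuple with $k\geq 3$) dominating set is a liar's dominating set.

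Finally I would substitute this bound into Theorem~\ref{theoliargenus}. Since $|V|\leq 15(\gamma_{LR}(G)+32\cdot\textup{eg}(\Sigma))$ and $\gamma_{LR}(G)\leq\gamma_k(G)$, monotonicity of the right-hand side in $\gamma_{LR}(G)$ immediately yields $|V|\leq 15(\gamma_k(G)+32\cdot\textup{eg}(\Sigma))$, as claimed. There is no genuine obstacle here: all the genus-dependent region-decomposition and cutting-along-a-noose machinery is already packaged inside Theorem~\ref{theoliargenus}, and the only fresh ingredient is the elementary set-theoretic fact that a $k$-tuple dominating set with $k\geq 3$ automatically satisfies both liar's domination conditions.
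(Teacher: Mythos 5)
Your proposal is correct and follows exactly the paper's route: the paper derives the corollary in one line from Theorem~\ref{theoliargenus} via the inequality $\gamma_{LR}(G)\leq\gamma_{k}(G)$ for $k\geq 3$, which you simply verify in more detail by checking that a $k$-tuple dominating set with $k\geq 3$ satisfies both liar's domination conditions. No differences of substance.
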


\section{Conclusion}
\label{sec-conclusion}

In this paper, we first have proved that \textsc{$k$-Tuple Domination Problem}
and \textsc{Liar's Domination Problem} are \textsf{W[2]}-hard for general
graphs. Then we have shown that these two problems admit linear kernel for planar graphs 
and also for bounded genus graphs. It would be interesting to look for other graph classes 
where these problems admit efficient parameterized algorithms.

\section*{Acknowledgements}

The authors want to thank Venkatesh Raman and Saket Saurabh for some of their nice 
introductory expositions to parametrization.

\section{Appendix}
\label{sec-appendix}

\begin{lemma}\label{lem-liar-dom-planar-NP}
\textsc{Liar's Domination Problem} is NP-complete for planar graphs.
\end{lemma}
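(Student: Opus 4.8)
The plan is to prove the statement in two parts: membership in \textsf{NP}, which is routine, and \textsf{NP}-hardness, for which I would give a polynomial reduction from \textsc{Dominating Set} restricted to planar graphs, a problem known to be \textsf{NP}-complete~\cite{GareyJohnson}. Membership is immediate: given a candidate set $D\subseteq V$, both condition~(i) and condition~(ii) can be tested in polynomial time by examining every vertex and every pair of vertices. The hardness argument is meant to reuse Slater's vertex-gadget construction~\cite{Slater2009}; the only genuinely new content is to check that, when the input graph is planar, the gadgets attach without destroying planarity and that the correctness argument transfers verbatim.

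First I would fix a planar instance $G=(V,E)$ of \textsc{Dominating Set} with $|V|=n$ and build a planar graph $G'$ as follows. To each vertex $v\in V$ I attach a constant-size local gadget $H_v$, in the spirit of Slater's construction: two new vertices $p_v,q_v$ each joined to $v$, a tester vertex $t_v$ joined to both $p_v$ and $q_v$, and a single pendant leaf on each of $p_v$ and $q_v$. Thus $v,p_v,t_v,q_v$ span a $4$-cycle carrying two pendant leaves, a planar piece that embeds inside any face of $G$ incident to $v$; placing the gadgets of distinct vertices in disjoint sub-regions of such incident faces keeps $G'$ planar, since each $H_v$ touches the rest of the graph only at $v$. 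I would set the target size to $p'=p+4n$.

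The correspondence I would establish is that $G$ has a dominating set of size at most $p$ if and only if $G'$ has a liar's dominating set of size at most $p'$. The pendant leaves force $p_v,q_v$ together with the two leaves into every liar's dominating set (a degree-one vertex can only be doubly dominated together with its neighbour), so the $4n$ gadget vertices are mandatory and already place two dominators inside $N_{G'}[v]$ for every original $v$. Consequently condition~(i) holds for all vertices, and condition~(ii) holds automatically for every pair of original vertices, since $N_{G'}[v]\cup N_{G'}[v']$ then meets $D$ in at least four forced vertices. The single binding constraint is the tester pair $(v,t_v)$: the set $N_{G'}[v]\cup N_{G'}[t_v]$ meets $D$ in $p_v,q_v$ and, beyond these, only in $v$, in an original neighbour of $v$, or in $t_v$, so condition~(ii) forces either $N_G[v]\cap D\neq\emptyset$, i.e.\ $v$ is dominated in $G$, or $t_v\in D$. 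For the forward direction I would take $D'=S\cup(\text{forced vertices})$ for a dominating set $S$ of $G$. For the converse, writing $S=D'\cap V$ and $T=\{v : t_v\in D'\}$, the $4n$ forced vertices exhaust all but at most $p$ elements of $D'$, so $|S|+|T|\le p$, and the tester constraint shows that $S\cup T$ is a dominating set of $G$ of size at most $p$.

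The step I expect to be the main obstacle is not the arithmetic but the two verifications that make the reduction simultaneously planar and correct. First, one must check that all $n$ gadgets can be embedded at once without any crossing; this follows because $H_v$ is attached only at $v$ and can be drawn in a small disk inside a face of a fixed planar embedding of $G$, but it should be argued explicitly. Second, one must carry out the exhaustive verification that condition~(ii) is met for \emph{every} pair of vertices of $G'$ and that the tester pair $(v,t_v)$ is the unique tight constraint, ruling out any cheaper alternative that would let a solution evade dominating $v$. Both checks are routine, yet they must be done carefully to guarantee the claimed equivalence, after which \textsf{NP}-hardness follows from the \textsf{NP}-completeness of planar \textsc{Dominating Set} and the lemma is proved.
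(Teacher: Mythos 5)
Your overall strategy---membership in \textsf{NP} plus a polynomial reduction from planar \textsc{Dominating Set} obtained by hanging a constant-size pendant gadget on every vertex---is exactly the paper's strategy (the paper attaches a path $v_i x_i y_i z_i$ of three new vertices to each $v_i$ and sets $k'=k+3n$, citing Slater for correctness). However, your particular gadget does not work, and the failure is precisely in the pair check that you defer as ``routine.'' Consider the pair consisting of $p_v$ and its pendant leaf $\ell$. Since $N_{G'}[\ell]=\{\ell,p_v\}\subseteq N_{G'}[p_v]=\{p_v,v,t_v,\ell\}$, condition~(ii) for this pair requires $\left|N_{G'}[p_v]\cap D\right|\geq 3$, and only two of the four candidates ($p_v$ and $\ell$) are among your forced vertices. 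Hence every liar's dominating set of $G'$ must contain $v$ or $t_v$ for \emph{every} $v\in V$, so the tester pair $(v,t_v)$ is not the unique tight constraint. This breaks both directions of your equivalence: in the forward direction, $D'=S\cup(\text{forced vertices})$ with $|D'|\leq p+4n$ violates condition~(ii) at $(p_v,\ell)$ for every $v\notin S$; in the backward direction, your sets $S$ and $T$ are forced to satisfy $S\cup T= V$, so a solution of size $p+4n$ exists essentially only when $p\geq n$, and the reduction no longer encodes domination.

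The paper's path gadget avoids this trap because the analogous leaf--support pair $(y_i,z_i)$ has $N_{G'}[y_i]\cup N_{G'}[z_i]=\{x_i,y_i,z_i\}$ entirely inside the gadget: there condition~(ii) is used \emph{constructively} to force the third gadget vertex $x_i$ into every solution, and it is then condition~(i) at $v_i$---which needs a second dominator in $N_{G'}[v_i]=N_G[v_i]\cup\{x_i\}$ beyond $x_i$---that encodes domination of $v_i$. To salvage your gadget you would need the closed neighborhood of each leaf's support to already contain three forced vertices (for instance by attaching two leaves to each of $p_v$ and $q_v$, with the budget adjusted accordingly), or simply switch to a gadget, like the paper's, in which every leaf-pair constraint stays internal to the gadget.
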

\begin{proof}
The reduction is from \textsc{Domination Problem} in planar graphs, which is known to be NP-complete\cite{GareyJohnson}. 
Let $G=(V,E)$ be a planar graph with $V=\{v_1, v_2, \ldots, v_n\}$ and $k$ be an integer. We construct an instance $G'=(V',E')$ 
and $k'$ of \textsc{Liar's Domination Problem} as follows: We add a set of $3n$ new vertices $S=\{x_i, y_i, z_i|1\leq i\leq n\}$ 
to the vertex set of $V$, i.e., $V'=V\cup S$ and the edge set of $G'$ is given by 
$E'=E\cup \{v_ix_i, x_iy_i, y_iz_i|1\leq i\leq n\}$. Note that, since $G$ is planar, so is $G'$. Also assume that $k'=k+3n$. 
In~\cite{Slater2009}, it is proved that $G$ has a dominating set of cardinality at most $k$ if and only if $G'$ has a liar's 
dominating set of cardinality at most $k'=k+3n$.

Thus, \textsc{Liar's Domination Problem} is NP-complete for planar graphs.
\end{proof}

\phantomsection
\bibliographystyle{alpha}
\addcontentsline{toc}{section}{Bibliography}
\bibliography{liardom}

\end{document}